\documentclass[oribib]{llncs}
\usepackage{amsmath}

\newcommand{\ignore}[1]{}

\usepackage{times}
\usepackage{helvet}
\usepackage{courier}
\usepackage{color}
\usepackage{multicol}

\usepackage{amssymb}
\usepackage{graphicx}
\usepackage{color}
\usepackage{hhline}

\ignore{++++++++++++++
\documentclass[a4paper,UKenglish]{lipics}

\usepackage{microtype}

\bibliographystyle{plainurl}

\usepackage{multicol}
\usepackage{amsmath}
\usepackage{amssymb}
\usepackage{graphicx}
\usepackage{smallsec}
\usepackage{float}
\usepackage{epsfig}
\usepackage{color}
\usepackage{hhline}
}

\usepackage{pdfsync}

\newcommand{\boxtheorem}{\hfill $\Box$}
\newcommand{\nit}[1]{{\it #1}}

\usepackage{times}

\usepackage{graphicx}

\usepackage{algpseudocode}
\usepackage{algorithm}
\usepackage{epstopdf}
\usepackage{enumerate}
\usepackage{times}

\newcounter{lemmaA-counter}
\newcounter{propositionA-counter}

\setcounter{lemmaA-counter}{0}

\setcounter{propositionA-counter}{0}

{\vskip \abovedisplayskip \refstepcounter{lemmaA-counter}%
\noindent {\bf Lemma A.\arabic{lemmaA-counter}.}}%

{\vskip \abovedisplayskip \refstepcounter{propositionA-counter}%
\noindent {\bf Proposition A.\arabic{propositionA-counter}.}}%

\newcommand{\defproof}[2]{{\noindent\bf Proof of #1:\
}#2 \boxtheorem}

\newcommand{\mc}[1]{\mathcal{ #1}}

\newcommand{\bcq}{BCQ}
\newcommand{\cq}{CQ}

\newcommand{\mf}[1]{\mathfrak{ #1}}
\newcommand{\n}{~{\it not}~}

\newcommand{\nn}{\nit{null}}

\newcommand{\fo}{FO}

\newcommand{\red}[1]{\textcolor{red}{#1}}
\newcommand{\blue}[1]{\textcolor{blue}{#1}}

\newcommand{\sfd}{{\sf d}}
\newcommand{\sfs}{{\sf s}}

\newcommand{\srep}{\nit{Rep}^{\sf S}(D,\Sigma)}
\newcommand{\crep}{\nit{Crep}(D,\Sigma)}


\ignore{\volumeinfo{\ignore{editors here}}
{2}
{}
{31}
{1}
{1}
\EventShortName{XXX}
++++++++++++++++}

\title{\vspace*{-1cm} {\bf Repair-Based Degrees of Database Inconsistency: Computation and Complexity}}
\author{{\bf Leopoldo Bertossi}\thanks{Member of the ``Millenium Institute for Foundational Research on Data" (IMFD, Chile). \ Email: bertossi@scs.carleton.ca. \ignore{ Research supported by NSERC Discovery Grant \#06148.}}\\
RelationalAI Inc. \ and \ Carleton University,  Canada \vspace{-5mm}}

\ignore{\author{{\bf Leopoldo Bertossi\vspace{-3mm}}\thanks{Member of the ``Millenium Institute for Foundational Research on Data" (IMFD, Chile). \ Email: \texttt{bertossi@scs.carleton.ca}. \ Research supported by NSERC Discovery Grant \#06148.} \ }\thanks{ Carleton Univ., \
School of Computer Science, Canada. \
bertossi@scs.carleton.ca.}
\institute{RelationalAI Inc. \ and \ Carleton University,  Canada}}
\institute{}


\titlerunning{Repair-Based Degrees of Database Inconsistency}
\authorrunning{Bertossi}


\begin{document}
\pagestyle{plain}
\maketitle
\thispagestyle{empty}

\begin{abstract}We propose a generic numerical measure of the inconsistency of a database with respect to a set of integrity constraints. It is based on an abstract repair semantics. In particular, an inconsistency measure associated to cardinality-repairs is investigated in detail. More specifically, it is shown that it can be computed via answer-set programs, but sometimes its computation can be intractable in data complexity. However, polynomial-time  deterministic and randomized approximations are exhibited. The behavior of this measure under small updates is analyzed, obtaining fixed-parameter tractability results. Furthermore, alternative inconsistency measures are proposed and discussed.
\end{abstract}


\ignore{\begin{abstract}In this work, ... \vspace{-2mm}
\end{abstract}
}

\section{Introduction}
\vspace{-2mm}

Intuitively, a relational database may be more or less consistent than other databases for the same schema and with the same integrity constraints (ICs). This comparison can be accomplished by assigning a {\em measure of inconsistency} to a database. The associated inconsistency degree of a database $D$ with respect to (wrt.) a set of ICs $\Sigma$ should depend on how complex it is to restore consistency; or more technically, on the class of {\em repairs} of $D$ wrt. $\Sigma$.  \ Accordingly, our take on this issue is that a degree of inconsistency depends upon a repair semantics, and then,
on how consistency is restored. This implies that a degree of inconsistency involves both the admissible repair actions and how close we want stay to the instance at hand. \ To achieve this, we can apply concepts and results about database repairs  (cf. \cite{Bertossi2011} for a survey and references).

The problem of measuring inconsistency has been investigated mostly by the knowledge representation community, but scarcely by the data management community. Furthermore, the approaches and results obtained in KR do not immediately apply or do not address the problems that are natural and relevant in databases, such as their computation and complexity in terms of the size of the database (i.e. data complexity). Actually, several (in)consistency measures have been considered in knowledge representation \cite{hunter,thimm,vanina}, mostly for propositional knowledge bases, or have been applied with grounded first-order representations, obtaining in essence a propositional representation. \ It becomes interesting to consider inconsistency measures that are closer to database applications, and whose formulation and computation stay at the relational level.

In this work we investigate possible ways to make these ideas concrete, by defining and analyzing  a generic class of repair-based measures of inconsistency of relational database instances. For a particular and natural inconsistency measure in this class we  provide a computational mechanism that uses  {\em answer-set programming} (ASP) \cite{brewka}, also known as {\em logic programming with stable model semantics} \cite{gl91}. We also provide some first results on the complexity of computing this measure. It turns out that ASPs provide the exact expressive and computational power needed to compute this measure.

The particular inconsistency measure we investigate in more depth here is motivated by one used before  to measure the degree of satisfaction of {\em functional dependencies} in a relational database \cite{mannila}. We extend and reformulate it in terms of database repairs, applying it to the larger class of {\em denial constraints} \cite{Bertossi2011}. Actually, it can be naturally applied to any class of monotonic ICs (in the sense that as the database grows only more violations can be added); and  also with other non-monotonic classes of ICs, such as inclusion- and tuple-generating  dependencies, as long as we repair only through tuple deletions. However, the measure can be redefined  using the symmetric difference between the original database and the repairs when tuple insertions are also allowed as repair actions.

The  investigation we carry out of the particular inconsistency measure is, independently from possible alternative measures, interesting {\em per se}: We stay at the relational (or first-order) level (as opposed to the propositional case usually considered in knowledge representation) and we stress computability and complexity issues in terms of the size of the database. This  provides a pattern for the investigation of other possible consistency measures, along similar lines. We are not aware of research that emphasizes computational aspects of inconsistency measures; and we start filling in this gap here.  It is likely  that other possible consistency measures in the relational setting are also polynomially-reducible to the one we investigate here (or the other way around), and results for one can be leveraged for the other(s). This is a matter of future research.

It  is natural to try to have a quantitative sense for the level of inconsistency that may be present in a large database. From this point of view, the inconsistency measure can be seen as a complex aggregation we may want to compute exactly or approximately. Our measure addresses such a need, and also opens the ground for counterfactual  analysis
of the data, in the direction of determining how the inconsistency degree changes under certain, possibly hypothetical, updates, much in the spirit of causality in databases \cite{suciu,tocs}.\footnote{The connection between database causality and database repairs was established and exploited for causality purposes in \cite{tocs,foiks18}.} Furthermore, this measure can be used as a basis for developing sampling techniques for estimating the inconsistency degree of a database. We give first steps in all these directions.

The kind of results that we obtain in terms of computation and complexity are extendible to other, broader logic-based settings, such as ontologies and knowledge bases, and  in particular, to {\em ontology-based data access}  (OBDA) \cite{xiao}, when the ontology becomes inconsistent. \ \ The  main contributions in this work are the following:
\begin{enumerate}
\item We introduce a general inconsistency-measure based on an abstract repair-semantics.  We specialize this measure to some well-known classes of repairs: Subset-repairs, most prominently cardinality-repairs, and attribute-based repairs.
\item We introduce answer-set programs to compute the latter inconsistency-measures, and we show that they provide the required expressive power.

\item We obtain data complexity results for the inconsistency measure, showing that its computation (as a decision problem) is NP-complete for denial constraints (DCs) and some classes of functional dependencies.

\item We obtain deterministic and randomized PTIME approximation results for the inconsistency measure, with approximation ratio $d$.
\item We establish that the inconsistency measure behaves well under updates, in that small updates keep the inconsistency measure within narrow boundaries. Furthermore, we establish that the computation of the inconsistency measure  is fixed-parameter tractable when one starts with a consistent instance, and the parameter is the number of updates.
\end{enumerate}

This paper is structured as follows. Section \ref{sec:prel} reviews background material. Section \ref{sec:incd} introduces a class of abstract, repair-based inconsistency measures. Section \ref{sec:asp} presents and discusses answer-set programs for the computation of the inconsistency measure. Section \ref{sec:comple} presents results on the complexity of the inconsistency measure computation, and some results on its approximate computation. Section \ref{sec:updates} obtains some first results on the behavior of the inconsistency measure under updates. Section \ref{sec:atr} shows how to modify the inconsistency measure in order to make it depend on attribute-based repairs.
Section \ref{sec:open} elaborates on several possible extensions of this work. Appendix A. shows DLV programs for the examples considered in Section \ref{sec:asp}.
\ Material from Section \ref{sec:incd} will appear (and was submitted) as a short communication in \cite{sum18}.

\vspace{-2mm}
\section{Background}\label{sec:prel} \vspace{-2mm}
\subsection{Relational databases and database repairs} \
  A relational schema $\mc{R}$ contains a domain, $\mc{C}$, of constants and a set, $\mc{P}$, of  predicates of finite arities. $\mc{R}$ gives rise to a language $\mf{L}(\mc{R})$ of first-order (FO)  predicate logic with built-in equality, $=$.  Variables are usually denoted by $x, y, z, ...$, and sequences thereof by $\bar{x}, ...$; and constants with $a, b, c, ...$, etc. An {\em atom} is of the form $P(t_1, \ldots, t_n)$, with $n$-ary $P \in \mc{P}$   and $t_1, \ldots, t_n$ {\em terms}, i.e. constants,  or variables.
  An atom is {\em ground} (a.k.a. a tuple) if it contains no variables. A DB  instance, $D$, for $\mc{R}$ is a finite set of ground atoms; and it serves as an  \ignore{The {\em active domain} of a DB instance $D$, denoted ${\it Adom}(D)$, is the set of constants that appear in atoms of $D$.} interpretation structure for  $\mf{L}(\mc{R})$.

A {\em conjunctive query} (\cq) is a \fo \ formula,  $\mc{Q}(\bar{x})$, of the form \ $\exists  \bar{y}\;(P_1(\bar{x}_1)\wedge \dots \wedge P_m(\bar{x}_m))$,
 with $P_i \in \mc{P}$, and (distinct) free variables $\bar{x} := (\bigcup \bar{x}_i) \smallsetminus \bar{y}$. If $\mc{Q}$ has $n$ (free) variables,  $\bar{c} \in \mc{C}^n$ \ is an {\em answer} to $\mc{Q}$ from $D$ if $D \models \mc{Q}[\bar{c}]$, i.e.  $Q[\bar{c}]$ is true in $D$  when the variables in $\bar{x}$ are componentwise replaced by the values in $\bar{c}$. $\mc{Q}(D)$ denotes the set of answers to $\mc{Q}$ from $D$. $\mc{Q}$ is a {\em boolean conjunctive query} (\bcq) when $\bar{x}$ is empty; and when {\em true} in $D$,  $\mc{Q}(D) := \{\nit{true}\}$. Otherwise, it is {\em false}, and $\mc{Q}(D) := \emptyset$. Sometimes CQs are written  in Datalog notation as follows: \ $\mc{Q}(\bar{x}) \leftarrow P_1(\bar{x}_1),\ldots, P_m(\bar{x}_m)$.

In this work we consider integrity constraints (ICs), i.e. sentences of $\mf{L}(\mc{R})$,  that are: (a) {\em denial constraints} \ (DCs), i.e.  of the form $\kappa\!:  \neg \exists \bar{x}(P_1(\bar{x}_1)\wedge \dots \wedge P_m(\bar{x}_m))$,
where $P_i \in \mc{P}$, and $\bar{x} = \bigcup \bar{x}_i$; and (b) {\em functional dependencies} \ (FDs), i.e. of the form  $\varphi\!:  \neg \exists \bar{x} (P(\bar{v},\bar{y}_1,z_1) \wedge P(\bar{v},\bar{y}_2,z_2) \wedge z_1 \neq z_2)$.\footnote{The variables in $\bar{v}$ do not have to go first in the atomic formulas; what matters is keeping the correspondences between the variables in those formulas.} Here,
$\bar{x} = \bar{y}_1 \cup \bar{y}_2 \cup \bar{v} \cup \{z_1, z_2\}$, and $z_1 \neq z_2$ is an abbreviation for $\neg z_1 = z_2$. A {\em key constraint} \ (KC) is a conjunction of FDs: \  $\bigwedge_{j=1}^k \neg \exists \bar{x} (P(\bar{v},\bar{y}_1) \wedge P(\bar{v},\bar{y}_2) \wedge y_1^j \neq y_2^j)$,
with $k = |\bar{y_1}| = |\bar{y}_2|$, and generically $y^j$ stands for the $j$th variable in $\bar{y}$. For example, $\forall x \forall y \forall z(\nit{Emp}(x,y) \wedge \nit{Emp}(x,z) \rightarrow y = z)$, is an FD (and also a KC) that could say that an employee ($x$) can have at most one salary. This FD is usually written as $\nit{EmpName} \rightarrow \nit{EmpSalary}$.
\ In the following, we will include FDs and key constraints among the DCs.
\  If an instance $D$ does not satisfy the set $\Sigma$ of DCs associated to the schema, we say that $D$ is {\em inconsistent}, which is denoted with \ $D \not \models \Sigma$.

When a database instance $D$ does not satisfy its intended ICs, it is {\em repaired}, by deleting or inserting tuples from/into the database. An instance obtained in this way is a {\em repair} of $D$ if it satisfies the ICs and departs in a minimal way from $D$ \cite{Bertossi2011}.  In this work, mainly to fix ideas and simplify the presentation, we consider mostly set $\Sigma$ of ICs that are monotone, in the sense that $D \not \models \Sigma$ and $D \subseteq D'$ imply $D' \not \models \Sigma$. This is the case for DCs.\footnote{Put in different terms, a DC is associated to (or is the negation of) a conjunctive queries $Q$, which is monotone  in the usual sense: \ $D \models Q \mbox{ and } D\subseteq D' \ \Rightarrow \ D' \models Q$.}  For monotone ICs, repairs are obtained by tuple deletions  (later on we will also consider value-updates as repair actions).   \ignore{ DCs are logical formulas of the form $\neg \exists \bar{x}(P_1(\bar{x}_1)\wedge \dots \wedge P_m(\bar{x}_m))$,
where  $\bar{x} = \bigcup \bar{x}_i$; and FDs are of the form  $\neg \exists \bar{x} (P(\bar{v},\bar{y}_1,z_1) \wedge P(\bar{v},\bar{y}_2,z_2) \wedge z_1 \neq z_2)$, with
$\bar{x} = \bar{y}_1 \cup \bar{y}_2 \cup \bar{v} \cup \{z_1, z_2\}$. \ In the following, we will treat FDs as DCs. A database is {\em inconsistent} wrt. a set of ICs $\Sigma$ when $D$ does not satisfy $\Sigma$, denoted $D \not \models \Sigma$.}
 We introduce the most common repairs of databases wrt. DCs  by means of an example.

\begin{example} \label{ex:rep} \  The DB $D = \{P(a), P(e), Q(a,b), R(a,c)\}$ is inconsistent wrt. $\Sigma$ containing the DCs \ $\kappa_1\!: \ \neg \exists x \exists y (P(x) \wedge Q(x,y))$, and \
$\kappa_2\!: \ \neg \exists x \exists y (P(x) \wedge R(x,y))$. Here, \  $D \not \models \{\kappa_1, \kappa_2\}$.

\ignore{
\begin{multicols}{2}
\hspace*{-0.5cm}{\small
\begin{tabular}{c|c|}\hline
$P$&A\\ \hline
&a\\
&e\\ \hhline{~-}
\end{tabular}~~
\begin{tabular}{c|c|c|}\hline
$Q$&A&B\\ \hline
& a & b\\ \hhline{~--}
\end{tabular}~~
\begin{tabular}{c|c|c|}\hline
$R$&A&C\\ \hline
& a & c\\ \hhline{~--}
\end{tabular} }
{\begin{eqnarray*}
\psi_1\!: \ \neg \exists x \exists y (P(x) \wedge Q(x,y)),\\
\psi_2\!: \ \neg \exists x \exists y (P(x) \wedge R(x,y)).
\end{eqnarray*}}
\end{multicols}  }
A {\em subset-repair},  in short {\em S-repair}, of $D$ wrt. $\Sigma$ is a $\subseteq$-maximal subset of $D$ that is consistent, i.e.  no proper superset is consistent. The following are
S-repairs: $D_1 = \{P(e), Q(a,b),$ $ R(a,c)\}$ and $D_2 = \{P(e), P(a)\}$. Under this repair semantics, both repairs are equally acceptable.
\ A {\em cardinality-repair},  in short a {\em C-repair}, is a maximum-cardinality S-repair.  $D_1$  is
the only C-repair. \boxtheorem
\end{example}

\vspace{-2mm}For an instance $D$ and a set $\Sigma$ of DCs, the sets of S-repairs and C-repairs are denoted with $\nit{Srep}(D,\Sigma)$ and $\nit{Crep}(D,\Sigma)$, resp. \ It holds: \ $\nit{Crep}(D,\Sigma) \subseteq \nit{Srep}(D,\Sigma)$.
More generally, for  a set $\Sigma$ of ICs, not necessarily DCs, they can be defined by (cf. \cite{Bertossi2011}): \

\begin{itemize}
\item[(a)]$\nit{Srep}(D,\Sigma) = \{D'~:~ D' \models \Sigma, \mbox{ and } D \bigtriangleup D' \mbox{ is minimal under set inclusion}\}$, and
\item[(b)]
$\nit{Crep}(D,\Sigma) = \{D'~:~ D' \models \Sigma, \mbox{ and } D \bigtriangleup D' \mbox{ is minimal in cardinality}\}$.
\end{itemize}
Here, $D \bigtriangleup D'$ is the symmetric set-difference $(D\smallsetminus D') \cup (D' \smallsetminus D)$.

\vspace{-2mm}
\subsection{Disjunctive answer-set programs}\label{sec:dasps}

We consider answer-set programs (ASPs) \cite{brewka}, and more specifically, disjunctive Datalog programs $\Pi$ with stable model semantics \cite{eiterGottlob97}. They consist of a set $E$ of ground atoms, called the {\em extensional database}, and a  finite number of rules of the form: \
\begin{equation}
A_1(\bar{x}_1) \vee \cdots \vee A_n(\bar{x}_n) \leftarrow P_1(\bar{x}'_1), \ldots, P_m(\bar{x}'_m), \n N_1(\bar{x}''_1), \ldots, \n N_k(\bar{x}''_k),\label{eq:rule}
\end{equation}
with $0\leq n,m,k$, the $A_i, P_j, N_s$ positive atoms, and \ $\cup \bar{x}_i, \cup\bar{x}''_j \subseteq \cup \bar{x}'_s$, i.e. the variables in the $A_i, N_s$ appear all among those
in the $P_j$. The terms in these atoms are constants or variables.

The constants in program $\Pi$ form the (finite) Herbrand universe $U$ of the program. The {\em ground version} of
program $\Pi$, $\nit{gr}(\Pi)$, is obtained by instantiating the variables in $\Pi$ with all possible combinations of
values from $U$. The Herbrand base, $\nit{HB}$, of $\Pi$ consists of all the possible atomic sentences obtained by instantiating the
predicates in $\Pi$ on $U$. A subset $M$ of $\nit{HB}$ is a (Herbrand) model of $\Pi$ if it contains $E$ and satisfies $\nit{gr}(\Pi)$, that is: For every
ground rule $A_1 \vee \ldots \vee A_n \leftarrow P_1, \ldots, P_m, \n N_1, \ldots,
\n N_k$ of $\nit{gr}(\Pi)$, if $\{P_1, \ldots, P_m\} \subseteq M$ and $\{N_1, \ldots, N_k\} \cap M = \emptyset$, then
$\{A_1, \ldots, A_n\} \cap M \neq \emptyset$. $M$ is a {\em minimal model} of $\Pi$ if it is a model of $\Pi$, and no proper subset of $M$ is a model of $\Pi$. $\nit{MM}(\Pi)$ denotes the class of minimal models of $\Pi$.

Now, take $S \subseteq \nit{HB}(\Pi)$, and transform $\nit{gr}(\Pi)$ into a new, positive program $\nit{gr}(\Pi)\!\downarrow \!S$ (i.e. without $\nit{not}$), as follows:
Delete every ground instantiation of a rule (\ref{eq:rule})  for which $\{N_1, \ldots, N_k\} \cap S \neq \emptyset$. Next, transform each remaining ground instantiation of a rule (\ref{eq:rule})  into $A_1 \vee \ldots A_n \leftarrow P_1, \ldots, P_m$. By definition, $S$ is a {\em stable model} of $\Pi$ iff $S \in \nit{MM}(\nit{gr}(\Pi)\!\downarrow \!S)$ \cite{gl91}. A program $\Pi$ may have none, one or several stable models; and each stable model is a minimal model (but not necessarily the other way around) \cite{gelfond}.

\ignore{++++
\subsection{The Repair-Causality Connection}\label{sec:rep-cause}

\vspace{-2mm}
\paragraph{Causes from repairs.} \ In \cite{tocs} it was shown that causes for queries can be obtained from DB repairs.
Consider the BCQ \ ${\mc{Q}\!: \exists \bar{x}(P_1(\bar{x}_1) \wedge \cdots \wedge P_m(\bar{x}_m))}$ that is (possibly unexpectedly) true in  $D$: \ $D \models \mc{Q}$. Actual causes for $\mc{Q}$, their  contingency sets, and responsibilities can be obtained from DB repairs. First,
$\neg \mc{Q}$ is logically equivalent to  the  DC: \vspace{-2mm}
\begin{equation}
{{\kappa(\mc{Q})}\!: \ \neg \exists \bar{x}(P_1(\bar{x}_1) \wedge \cdots \wedge P_m(\bar{x}_m))}. \label{eq:qkappa} \vspace{-1mm}
\end{equation}
So, if $\mc{Q}$ is true in $D$, \ $D$ is inconsistent wrt. $\kappa(\mc{Q})$, giving rise to repairs of $D$ wrt. $\kappa(\mc{Q})$.

Next, we build differences, containing a tuple $\tau$, between $D$ and  S-  or  C-repairs: \vspace{-2mm} \begin{eqnarray}
 \nit{Diff}^s(D,\kappa(\mc{Q}), \tau) \ &=& \ \{ D \smallsetminus D'~|~ D' \in \nit{Srep}(D,\kappa(\mc{Q})), \  \tau \in (D\smallsetminus D')\}, \label{eq:s}\\
 \nit{Diff}^c(D,\kappa(\mc{Q}), \tau) \ &=& \ \{ D \smallsetminus D'~|~ D' \in \nit{Crep}(D,\kappa(\mc{Q})), \ \tau \in (D\smallsetminus D')\}. \label{eq:c}
 \end{eqnarray}

\vspace{-1mm}
It holds \cite{tocs}: \ $\tau \in D$ is an {actual cause} for $\mc{Q}$ iff
$\nit{Diff}^s(D, \kappa(\mc{Q}), \tau) \not = \emptyset$. \ Furthermore, each S-repair $D'$ for which $(D\smallsetminus D') \in \nit{Diff}^s(D, \kappa(\mc{Q}), \tau)$ gives us $(D\smallsetminus (D' \cup \{\tau\}))$ as a subset-minimal contingency set for $\tau$. \ Also, if { $\nit{Diff}^s(D$  $\kappa(\mc{Q}),  \tau) = \emptyset$}, then {$\rho(\tau)=0$}.
 \ Otherwise, { $\rho(\tau)=\frac{1}{|s|}$}, where {  $s \in \nit{Diff}^s(D,$ $\kappa(\mc{Q}), \tau)$} and there is no { $s' \in \nit{Diff}^s(D,\kappa(\mc{Q}), \tau)$} with { $|s'| < |s|$}.
\ As a consequence we obtain that $\tau$ is a most responsible actual cause  for $\mc{Q}$ \ iff \
$\nit{Diff}^c\!(D,\kappa(\mc{Q}), \tau) \not = \emptyset$.

\vspace{-2mm}
\begin{example} (ex. \ref{ex:cause} cont.) \label{ex:kappa} \  With the same instance $D$ and query $\mc{Q}$, we consider the
DC \ $\kappa(\mc{Q})$:  \ $\neg \exists x\exists y( S(x)\wedge R(x, y)\wedge S(y))$, which is not satisfied by $D$.
\ Here, ${\nit{Srep}(D, \kappa(\mc{Q})) =\{D_1, D_2,D_3\}}$ and ${\nit{Crep}(D, \kappa(\mc{Q}))=\{D_1\}}$, with
$D_1=$ $ \{R(a_4,a_3),$ $ R(a_2,a_1), R(a_3,a_3), S(a_4), S(a_2)\}$, \  $D_2 = \{ R(a_2,a_1), S(a_4),S(a_2),$  $S(a_3)\}$, \  $D_3 =$ $\{R(a_4,a_3), R(a_2,a_1), S(a_2),S(a_3)\}$.

For tuple \ ${R(a_4,a_3)}$,  \ ${\nit{Diff}^s(D, \kappa(\mc{Q}), {R(a_4,a_3)})=\{D \smallsetminus D_2\}}$ $= \{ \{ R(a_4,a_3),$ \linebreak $ R(a_3,a_3)\} \}$. So,
 $R(a_4,a_3)$ is an actual cause,  with responsibility $\frac{1}{2}$. \ Similarly, $R(a_3,a_3)$ is an actual cause, with responsibility $\frac{1}{2}$.
\ For tuple ${S(a_3)}$,  \  $\nit{Diff}^c(D, \kappa(\mc{Q}), S(a_3)) =$ $ \{D \smallsetminus D_1\} =\{ S(a_3) \}$.
So, $S(a_3)$
is an actual cause,  with responsibility 1, i.e. a  {most responsible cause}. \boxtheorem
\end{example}

\vspace{-2mm}
It is also possible, the other way around, to characterize repairs in terms of causes and their contingency sets. Actually this connection can be used to obtain complexity results for
causality problems from repair-related computational problems \cite{tocs}. Most computational problems related to repairs, specially C-repairs, which are related to most responsible causes, are provably hard.
This is reflected in a high complexity for responsibility \cite{tocs} \ (cf. Section \ref{sec:compl}).
++++}

\vspace{-2mm}
\section{Repair Semantics and Inconsistency Degrees}\label{sec:incd}\vspace{-2mm} \ In general terms, a {\em repair semantics} {\sf S} for  a schema $\mc{R}$ that includes a set $\Sigma$ of ICs assigns to each  instance $D$ for $\mc{R}$  (which may not satisfy $\Sigma$),  a class $\nit{Rep}^{\sf S}(D,\Sigma)$ of
{\sf S}{\em -repairs} of $D$ wrt. $\Sigma$, which are instances of $\mc{R}$ that satisfy $\Sigma$ and depart from $D$ according to some minimization criterion.
\ Several repair semantics have been considered in the literature, among them and beside those introduced in Example \ref{ex:rep}, {\em prioritized repairs} \cite{stawo},  and {\em attribute-based repairs} that change attribute values by other data values, or by a null value, {\sf NULL}, as in SQL databases \cite{foiks18} (cf. Section \ref{sec:atr}).

According to our  take on how a database inconsistency degree depends on database repairs, we define the {\em inconsistency degree} of an instance $D$ wrt. a set of ICs $\Sigma$ in relation to a given repair semantics {\sf S}, as the  distance from $D$ to the class $\srep$:
\begin{equation}
\mbox{\nit{inc-deg}}^{\sf S}(D,\Sigma) := \nit{dist}(D,\srep). \label{eq:dist}
\end{equation}

\vspace{-1mm}This is an abstract measure that depends on {\sf S} and a given function that returns the distance, $\nit{dist}(W,\mc{W})$, from a world $W$ to a set $\mc{W}$ of possible worlds, which in this case are database instances. Under the assumption that any repair semantics should return $D$ when $D$ is consistent wrt. $\Sigma$ and
$\nit{dist}(D,\{D\}) = 0$, a consistent instance $D$ should have $0$ as inconsistency degree.\footnote{Abstract distances between two point-sets are investigated in \cite{eiterMannila}, with their computational properties. Our setting is a particular case.}

Notice that  the class $\srep$ might contain instances that are not sub-instances of $D$, for example, for different forms of {\em inclusion dependencies} (INDs) we may want to insert tuples;\footnote{For INDs repairs based only on tuple deletions can be considered \cite{chomicki}.} or  even under DCs, we may want to appeal to  attribute-based repairs. \ignore{For example, \cite{wijsen} investigates repairs of this kind; attribute values can be changed by other values in the data domain. In \cite{tkde,tplp} replacement on values by a null \`a la SQL (or at least that disallows joins and comparisons through it) are proposed and investigated, and similarly in \cite[sec. 7.4]{tocs}, to capture attribute-level causes.} {\em In the following, until further notice,  we consider only repairs that are sub-instances of the given instance.} \ Still this leaves much room open for different kinds of repairs. For example, we may prefer to delete some tuples over others \cite{stawo}. Or, as in database causality \cite{suciu,tocs}, the database can be partitioned into {\em endogenous} and {\em exogenous} tuples, assuming we have more control on the former, or we trust more the latter; and we prefer {\em endogenous repairs} that delete only, or preferably, endogenous tuples \cite{foiks18} (cf. Example \ref{ex:endo} below).

\vspace{-2mm}
\subsection{An inconsistency measure}\label{sec:ours}\vspace{-2mm} Here we consider a concrete instantiation of $\mbox{\nit{inc-deg}}^{\sf S}(D,\Sigma)$ in (\ref{eq:dist}), and to fix ideas, only DCs. For them,  the repair semantics $\nit{Srep}(D,\Sigma)$ and $\nit{Crep}(D,\Sigma)$ are  particular cases of repair semantics
 {\sf S} where each $D' \in \srep$ is maximally contained in $D$. On this basis, we can define: \vspace{-3mm}
\begin{eqnarray}
\hspace*{-1mm}\mbox{\nit{inc-deg}}^{{\sf S},g_3\!}(D,\Sigma)  &:=&  \nit{dist}^{g_3\!}(D,\srep)  :=  \frac{|D| \! - \! \nit{max}\{ |D'| : D' \in \srep  \}}{|D|} \nonumber\\
&=& \frac{ \nit{min} \{|D \smallsetminus D'|~:~ D' \in \srep  \}}{|D|}, \hspace*{-1mm}\label{eq:distG3}
\end{eqnarray}

 \vspace{-3mm} \noindent inspired by distance $g_3$ in \cite{mannila} to measure the degree of violation of an FD by a database.\footnote{Other possible measures for single FDs and relationships between them can be found in \cite{mannila}.} This measure can be applied more generally as a ``quality measure", not only in relation to inconsistency, but also whenever  possibly several intended ``quality versions" of a dirty database exist, e.g. as determined by additional contextual information \cite{context}.

 Particularly prominent are the instantiation of (\ref{eq:distG3}) on the S-repair and C-repair semantics:
 \begin{eqnarray}
\mbox{\nit{inc-deg}}^{s,g_3\!}(D,\Sigma) &:=& \frac{|D| - \nit{max}\{ |D'| ~:~D' \in \nit{Srep}(D,\Sigma)  \}}{|D|}\label{eq:s}\\
\mbox{\nit{inc-deg}}^{c,g_3\!}(D,\Sigma) &:=& \frac{|D| - \nit{max}\{ |D'| ~:~D' \in  \nit{Crep}(D,\Sigma) \}}{|D|}\label{eq:c}
\end{eqnarray}

\begin{example} (ex. \ref{ex:rep} cont.) \label{ex:rep2} \ignore{Consider again
$D = \{P(a), P(e), Q(a,b), R(a,c)\}$, which violates the set of DCs $\Sigma = \{\kappa_1, \kappa_2\}$.} Here, $\nit{Srep}(D,\Sigma) = \{D_1, D_2 \}$, and
$\nit{Crep}(D,\Sigma) = \{D_1 \}$\ignore{, with $D_1 = \{P(e),$ $ Q(a,b), R(a,c)\}$ and $D_2 = \{P(a), P(e)\}$}. They provide the inconsistency degrees:
$$\mbox{\nit{inc-deg}}^{s,g_3\!}(D,\Sigma)  = \frac{4 -|D_1|}{4} = \frac{1}{4}, \ \mbox{ and } \
\mbox{\nit{inc-deg}}^{c,g_3\!}(D,\Sigma) =  \frac{4 - |D_1|}{4} = \frac{1}{4},$$
 respectively. \boxtheorem
\end{example}

It holds  $\nit{Crep}(D,\Sigma) \subseteq \nit{Srep}(D,\Sigma)$, but $\nit{max}\{|D'|~:~D' \in \nit{Crep}(D,\Sigma)\}$ $ = \nit{max}\{|D'|~:~D' \in \nit{Srep}(D,\Sigma)\}$,  so it holds $\mbox{\nit{inc-deg}}^{s,g_3\!}(D,\Sigma) = \mbox{\nit{inc-deg}}^{c,g_3\!}(D,\Sigma)$.
 \ This measure always takes a value between $0$ and $1$. The former when $D$ is consistent (so it itself is its only repair).

 The measure takes the value $1$ only when $\srep = \emptyset$ \ (assuming that
 $\nit{max} \{$ $|D'| ~:$ $~ D' \in \emptyset\} = 0$), i.e. the database is {\em irreparable}, which is never the case for DCs and S-repairs: there is always an S-repair. However, it could be irreparable with different, but related repair semantics.
  For example, as mentioned above, in database causality \cite{suciu} tuples can be endogenous or exogenous, being the former those we can play with, e.g. applying virtual updates on them, producing counterfactual scenarios. On this basis, one can define {\em endogenous repairs}, which are obtained by updating only endogenous tuples \cite{tocs}.

\begin{example} (ex. \ref{ex:rep2} cont.) \label{ex:endo} Assume $D$ is partitioned into endogenous and exogenous tuples, say resp. \ $D = D^n \stackrel{.}{\cup} D^x$, with $D^n = \{ Q(a,b), R(a,c) \}$ and $D^x = \{ P(a), P(e)\}$. In this case, the {\em endogenous-repair semantics} that allows only a minimum number of deletions of endogenous tuples, defines the class of repairs: $\nit{Crep}^n(D,\Sigma) = \{D_2\}$, with $D_2$ as above. In this case,\footnote{For certain forms of {\em prioritized repairs}, such as endogenous repairs, the normalization coefficient $|D|$ might be unnecessarily large. In this particular case, it might be better to use $|D^n|$.} \ $\mbox{\nit{inc-deg}}^{c,n,g_3\!}(D,\Sigma) = \frac{4-2}{4} = \frac{1}{2}$.
\ Similarly, if now $D^n = \{ P(a), Q(a,b)  \}$ and $D^x = \{P(e), R(a,c)\}$, there are no endogenous repairs, and \ $\mbox{\nit{inc-deg}}^{c,n,g_3\!}(D,\Sigma) = 1$.
\boxtheorem
\end{example}


\vspace{-2mm}
\section{ASP-Based Computation of the Inconsistency Measure} \label{sec:asp}\vspace{-2mm} We concentrate here on measure $\mbox{\nit{inc-deg}}^{c,g_3\!}(D,\Sigma)$ \ in (\ref{eq:c}); and  more generally, on $\mbox{\nit{inc-deg}}^{s,g_3\!}(D,\Sigma)$, which can be computed through the maximum cardinality of an S-repair for $D$ wrt. $\Sigma$, or, equivalently, using the cardinality of a (actually, every) repair in $\nit{Crep}(D,\Sigma)$. \ This can be done  through a compact specification of repairs by means of ASPs.\footnote{This approach was followed in \cite{foiks18} to compute maximum {\em responsibility degrees} of database tuples as causes for violations of DCs, appealing to a causality-repair connection \cite{tocs}.} More precisely, given a database instance $D$ and a set of ICs $\Sigma$ (not necessarily DCs), it is possible to write an ASP whose intended models, i.e. the {\em stable models} or {\em answer sets}, are in one-to-one correspondence with the S-repairs of $D$ wrt. $\Sigma$. Cf.  \cite{monica} for a general formulation. Here we show only some cases of ICs and examples. In them we use, only to ease the formulation and presentation, global unique tuple identifiers (tids), i.e. every tuple $R(\bar{c})$ in $D$ is represented as $R(t;\bar{c})$ for some integer (or constant) $t$  that is not used by any other tuple in $D$.

If $\Sigma$ is a set of DCs containing  \ $\kappa\!:  \neg \exists \bar{x}(P_1(\bar{x}_1)\wedge \dots \wedge P_m(\bar{x}_m))$, we first introduce for a predicate $P_i$ of the database schema, a nickname predicate $P_i'$  that has, in addition to a first attribute for tids,  an extra, final attribute to hold an annotation from the set $\{\sf{d}, \sf{s}\}$, for ``delete" and ``stays", resp. \ Nickname predicates are used to represent and compute repairs. Next, the {\em repair-ASP}, $\Pi(D,\Sigma)$, for $D$ and $\Sigma$ contains all the tuples in $D$ as facts (with tids), plus the following rules for $\kappa$:
\begin{eqnarray*}
P_1'(t_1;\bar{x}_1,\sfd)\vee \cdots \vee P_m'(t_n;\bar{x}_m,\sfd) &\leftarrow& P_1(t_1;\bar{x}_1), \dots, P_m(t_m;\bar{x}_m). \\
P_i'(t_i;\bar{x}_i,\sfs) &\leftarrow& P_i(t_i;\bar{x}_i), \ \nit{not} \ P_i'(t_i;\bar{x}_i,\sfd). \ \ \ \  \ i=1,\cdots,m.
\end{eqnarray*}
A stable   model $M$ of the program determines a repair $D'$ of $D$: \ $D' := \{P(\bar{c})~|$ \linebreak $P'(t;\bar{c},\sfs) \in M\}$, and every repair can be obtained in this way \cite{monica,tplpP2P}.

For an FD in $\Sigma$, say $\varphi\!: \ \neg \exists xyz_1z_2vw(R(x,y,z_1,v) \wedge R(x,y,z_2,w) \wedge z_1 \neq z_2)$, which makes the third attribute functionally depend upon the first two, the repair program contains the rules:
\begin{eqnarray*}
R'(t_1;x,y,z_1,v,\sfd) \vee R'(t_2;x,y,z_2,w,\sfd) &\leftarrow& R(t_1;x,y,z_1,v), R(t_2;x,y,z_2,w),\\ && \hspace*{4.4cm}z_1 \neq z_2.\\
R'(t;x,y,z,v,\sfs) &\leftarrow& R(t;x,y,z,v), \ \nit{not} \ R'(t;x,y,z,v,\sfd).
\end{eqnarray*}
 For DCs and FDs, the repair programs can be made {\em normal}, i.e.  non-disjunctive, by moving all the disjuncts but one, in turns, in negated form to the body of the rule \cite{monica} (cf. Section \ref{sec:dlv}). For example, the rule
$P(a) \vee R(b) \leftarrow \nit{Body}$, can be written as the two rules \ $P(a)  \leftarrow \nit{Body}, \nit{not} \ R(b)$ and $R(b) \leftarrow \nit{Body}, \nit{not} \ P(a)$.\footnote{This transformation preserves the semantics, because these repair-ASPs turn out to be head-cycle-free \cite{monica}.} Still the resulting program can be {\em non-stratified}
if there is recursion via negation \cite{gelfond}, as in the case of FDs, and  DCs with self-joins.

\begin{example} (ex. \ref{ex:rep} cont.)\label{ex:rep2} The initial instance with tids is $D=\{P(1,e), P(2,a),$ $ Q(3,a,b),$  $R(4,a,c), \}$. The repair program contains the following rules, with the first and second for $\kappa_1$ and $\kappa_2$, resp.:
\begin{eqnarray*}
P'(t_1;x,\sfd) \vee Q'(t_2;x,y,\sfd) &\leftarrow& P(t_1;x), Q(t_2;x, y). \\
P'(t_1;x,\sfd) \vee R'(t_2;x,y,\sfd) &\leftarrow& P(t_1;x), R(t_2;x, y).\\
P'(t;x,\sfs) &\leftarrow& P(t;x), \ \nit{not} \ P'(t;x,\sfd). \ \ \ \ \mbox{ etc. }
\end{eqnarray*}
The {\em repair program} $\Pi(D,\{\kappa_1,\kappa_2\})$ has the stable models: \ $\mc{M}_1 = \{P'(1,e,{\sf s}),$\linebreak $Q'(3,a,b,{\sf s}),$  $ R'(4,a,c,{\sf s}),$ $ P'(2,a,{\sf d}) \}$ $ \cup \ D$ \ and \ $\mc{M}_2 = \{P'(1,e,{\sf s}), P'(2,a,{\sf s}),$ $Q'(3,a,b,{\sf d}), R'(4,a,c,{\sf d}) \} \cup \ D$, which  correspond to  the S-repairs $D_1, D_2$, resp.
\boxtheorem \end{example}

Similar repair programs can be produced to specify {\em attribute-based repairs} that, instead of deleting (or inserting) tuples, change attribute values in existing tuples. This is the case, for example, when one allows changing values into a null value as in SQL databases, on the assumption that joins and comparisons through nulls do not hold \cite{foiks18}. This becomes relevant in Section \ref{sec:atr}.

Now, and back to tuple-based repairs, to compute $\mbox{\nit{inc-deg}}^{c,g_3\!}(D,\Sigma)$, for the C-repair semantics, we can add rules to $\Pi$ to collect the {\em tids} of tuples deleted from the database, a rule with aggregation to compute the number of deleted tuples, plus a {\em weak program-constraint} \cite{dlv} that eliminates all the stable models (equivalently, S-repairs) that violate the constraint a non-minimum number of times:
\begin{eqnarray*}
\nit{Del}(t) &\leftarrow& P_i'(t,\bar{x}_i,{\sf d}). \ \ \ \ \ \  i = 1, \ldots, m\\
\nit{NumDel}(n) &\leftarrow& \# \nit{count}\{t : \nit{Del}(t)\} = n.\\
&:\sim& \nit{Del}(t).
\end{eqnarray*}
In each model of the program, the first rules collect the tids of deleted tuples, and the second rule counts the total number of deletions. The last rule keeps only the models where the number of deletions is a minimum.\footnote{If we had a (hard) program-constraint instead, written \ $\leftarrow \nit{Del}(t)$, we would be prohibiting the satisfaction of the rule body (in this case, deletions would be prohibited), and we would be keeping only the models where there are no deletions. This  would return no model or the original $D$ depending on whether $D$ is inconsistent or not.} The reason for introducing weak constraints is that, without them, the stable models of the program capture the S-repairs, i.e. $\subseteq$-maximal and consistent sub-instances of $D$, but not necessarily maximum in cardinality. With the weak constraint we keep only cardinality repairs.

 \begin{example} (ex. \ref{ex:rep2} cont.) \label{ex:del} If we add to $\Pi$ the rule \
\ $\nit{Del}(t) \leftarrow R'(t,x,y,{\sf d})$, and similarly for $Q'$ and $P'$;  \ and next, a rule to count the deleted tuples,
$\nit{NumDel}(n) \leftarrow \# \nit{count}\{t : \nit{Del}(t)\} = n$, the stable model $\mc{M}_1$ of the original program would be extended with the atoms $\nit{Del}(2), \nit{NumDel}(1)$. Similarly for $\mc{M}_2$.

 If we also add the weak constraint \ $:\sim \nit{Del}(t)$, only (the extended) model $\mc{M}_1$ remains. It corresponds to the only C-repair. \boxtheorem
 \end{example}

 The value for $\nit{NumDel}$ in any of the remaining models can be used to compute  $\mbox{\nit{inc-deg}}^{c,g_3\!}(D,\Sigma)$. So, there is no need to explicitly compute all stable models, their sizes, and compare them. This value can be obtained by means of the query \linebreak ``$:\!\!- \ \nit{NumDel}(x)?$", answered by the extended program  under the {\em brave semantics} (returning answers that hold in {\em some} of the stable models). \ Appendix A. shows an extended example that uses DLV-Complex \cite{dlv,calimeri08} for the computation with the ASPs we introduced in this section.

It has been established that brave reasoning with repair programs for DCs with weak constraints is $\Delta^P_2(\nit{log}(n))$-complete in data complexity, i.e. in the size of the database \cite{monica,buca}. As we will see in Section
 \ref{sec:comple} (cf. Theorem \ref{thm:fp}), this complexity matches the intrinsic complexity of the computation of the inconsistency measure.

\vspace{-2mm}
\section{Complexity of the Inconsistency Measure Computation}\label{sec:comple}
\vspace{-2mm}
We recall first that the {\em functional complexity class} $\nit{FP}^{\nit{NP(log(n))}}$ contains computation problems whose counterparts as decision problems
are in the class $\nit{P}^{\nit{NP(log(n))}}$, i.e. they are solvable in polynomial time with a logarithmic number of calls to an $\nit{NP}$-oracle \cite{papa}.

\begin{theorem} \label{thm:fp} \em For DCs, computing $\mbox{\nit{inc-deg}}^{c,g_3\!}(D,\Sigma)$ belongs to the functional class $\nit{FP}^{\nit{NP(log(n))}}$; and there is a relational schema and a set of DCs $\Sigma$, such that
computing $\mbox{\nit{inc-deg}}^{c,g_3\!}(D,\Sigma)$  is $\nit{FP}^{\nit{NP(log(n))}}$-complete (all this in data complexity, i.e. in the size of $D$).\boxtheorem
\end{theorem}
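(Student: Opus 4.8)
The plan is to establish the two parts of the statement separately: first the upper bound (membership in $\nit{FP}^{\nit{NP}(\log(n))}$), and then the matching lower bound (hardness for a specific schema and set of DCs).

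\medskip
\noindent\textbf{Membership.} \ First I would recall that, by the definition in (\ref{eq:c}) together with the observation that \nit{inc-deg}$^{c,g_3}$ equals \nit{inc-deg}$^{s,g_3}$, computing the measure reduces to computing $\nit{min}\{|D \smallsetminus D'| : D' \in \nit{Srep}(D,\Sigma)\}$, i.e. the minimum number of tuples one must delete from $D$ to restore consistency; the final measure is then obtained by a single division by $|D|$, which is a polynomial-time post-processing step that does not affect the complexity class. The key observation is that this minimum deletion count is an integer between $0$ and $|D|$, so it has only $O(\log |D|)$ bits. To pin it down I would use binary search: the decision problem ``is there a consistent $D' \subseteq D$ with $|D \smallsetminus D'| \leq k$?'' is in $\nit{NP}$ (guess the deleted set of size at most $k$, and check in polynomial time that the remaining instance satisfies the fixed set $\Sigma$ of DCs, which is a fixed first-order check in data complexity). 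Running binary search over $k \in \{0,1,\ldots,|D|\}$ requires $O(\log |D|)$ calls to this $\nit{NP}$-oracle, which places the computation of the minimum deletion count, hence of the measure, in $\nit{FP}^{\nit{NP}(\log(n))}$. This also matches the ASP upper bound quoted at the end of Section~\ref{sec:asp}.

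\medskip
\noindent\textbf{Hardness.} \ For the lower bound I would exhibit a single relational schema and a fixed set $\Sigma$ of DCs for which computing the measure is $\nit{FP}^{\nit{NP}(\log(n))}$-hard in data complexity. The natural route is to reduce from a known complete problem for this class; the canonical choice is computing the size of a maximum independent set (or, dually, a minimum vertex cover) in a graph, or more directly the cardinality version of minimum hitting set / repair problems that are already known to be complete at this level. Concretely, I would encode a graph $G=(V,E)$ as a database instance over a schema with a vertex relation and an edge relation, and use a single self-join denial constraint such as $\kappa\!: \neg \exists x \exists y\, (\nit{Vertex}(x) \wedge \nit{Vertex}(y) \wedge \nit{Edge}(x,y))$ that is violated exactly by pairs of adjacent chosen vertices; deleting a minimum set of vertex-tuples to satisfy $\kappa$ corresponds to computing a minimum vertex cover of $G$. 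Since the size of a maximum S-repair (equivalently, the minimum number of vertex deletions) determines a maximum independent set of $G$, and since computing this cardinality is known to be $\nit{FP}^{\nit{NP}(\log(n))}$-complete, the value $|D| - \nit{max}\{|D'| : D' \in \nit{Srep}(D,\Sigma)\}$ recovers it; dividing by the fixed-denominator $|D|$ and clearing it preserves hardness. I would invoke the repair–causality connection and the known complexity of C-repair cardinality problems (as cited around \cite{tocs,monica,buca}) to supply the underlying complete problem rather than reproving completeness from scratch.

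\medskip
\noindent\textbf{Main obstacle.} \ The membership direction is essentially routine given the bit-length bound and the binary-search idea, so the delicate part is the hardness reduction. The crux is to find a \emph{fixed} schema and a \emph{fixed} $\Sigma$ (so that the reduction is genuinely in data complexity, with only the instance $D$ varying) for which the reduction is both correct and parsimonious enough that the numerical value of the measure encodes the answer to a $\nit{FP}^{\nit{NP}(\log(n))}$-complete problem. The two things to verify carefully are: (i) that the minimum-deletion cardinality over S-repairs equals the quantity in the source problem (e.g.\ the vertex-cover number), which requires arguing that an optimal repair never needs to delete anything but the intended vertex-tuples; and (ii) that recovering the raw cardinality from the normalized measure $\frac{|D|-\nit{max}\{|D'|\}}{|D|}$ is a polynomial-time operation, which it is since $|D|$ is known. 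I expect (i) to be where the real care is needed, since one must rule out ``cheaper'' repairs that delete edge-tuples instead of vertex-tuples; this can be handled by making edge-tuples unremovable in spirit through the structure of $\Sigma$, or by weighting the encoding so that vertex deletions are always the optimal repair action.
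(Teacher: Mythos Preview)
Your proposal is correct in both directions, and for membership it is exactly the standard binary-search argument that underlies the result the paper cites. The paper's own proof is considerably terser: it simply observes that $\mbox{\nit{inc-deg}}^{c,g_3}(D,\Sigma)$ is polynomially inter-reducible with the size of a C-repair (divide or multiply by $|D|$), and then cites \cite[Theorem~3]{icdt07} for the fact that computing the size of a C-repair wrt.\ DCs is $\nit{FP}^{\nit{NP}(\log(n))}$-complete in data. So where you build an explicit reduction from minimum vertex cover, the paper delegates hardness entirely to prior work; your route is more self-contained, the paper's is shorter.

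One correction to your write-up: the obstacle you flag as ``(i)'' --- that an optimal repair might delete $\nit{Edge}$-tuples rather than $\nit{Vertex}$-tuples --- is not actually an obstacle, and the workarounds you sketch (making edges unremovable, adding weights) are unnecessary. Here is the missing one-line argument. Each hyperedge of the conflict hypergraph has the form $\{\nit{Vertex}(u),\nit{Vertex}(v),\nit{Edge}(u,v)\}$. Take any hitting set $S$ and split it as $S=S_V\cup S_E$ into vertex-tuples and edge-tuples. Then $S_V$ must be a vertex cover of the subgraph $G-S_E$, so $|S_V|\geq \nit{VC}(G-S_E)\geq \nit{VC}(G)-|S_E|$ (deleting one edge lowers the vertex-cover number by at most one). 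Hence $|S|=|S_V|+|S_E|\geq \nit{VC}(G)$, and since any vertex cover of $G$ is itself a hitting set, the minimum repair size equals $\nit{VC}(G)$ on the nose. With this, your reduction goes through directly; the right citation for $\nit{FP}^{\nit{NP}(\log(n))}$-completeness of the vertex-cover/independent-set cardinality is Krentel~\cite{krentel}, not \cite{tocs,monica,buca}.
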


 This result and the complexity of ASP evaluation (cf. last paragraph of Section \ref{sec:asp}) show that the normal ASPs introduced in Section \ref{sec:asp} have the right expressive power to deal with the computational problem at hand.
\ We wonder whether we obtain a similar result for FDs. Although for the inconsistency measure the difference between S- and C-repairs does not matter, the next example shows first that there is a difference   between S- and C-repairs in the presence of FDs.

 \begin{example}\label{ex:fds} Consider the schema $R(A,B,C)$, with $\Sigma$ containing the FDs \ $A \rightarrow B$ and $C \rightarrow B$, and the inconsistent instance $D = \{R(a,b,d), R(a,e,c), R(a,b,c)\}$. The S-repairs are $D_1 = \{R(a,b,d), R(a,b,c)\}$ and $D_2 = \{R(a,e,c)\}$. The only C-repair is $D_1$, and $\mbox{\nit{inc-deg}}^{c,g_3\!}(D,\Sigma) = \frac{1}{3}$. \boxtheorem
 \end{example}

\begin{remark}
\label{rem:cg}  In the following we make use several times of the fact that, for a set $\Sigma$ of DCs and an instance $D$, one can build a {\em conflict-hypergraph}, $\nit{CG}(D,\Sigma)$, whose vertices are the tuples in $D$ and hyperedges are subset-minimal sets of tuples that simultaneously participate in the violation of one of the DCs in $\Sigma$ \cite{chomicki,icdt07}. More precisely, for a DC $\kappa\!: \ \neg \exists \bar{x}(P_1(\bar{x}_1) \wedge \ldots \wedge P_l(\bar{x}_l))$ in $\Sigma$, $S \subseteq D$ forms a hyperedge, if $S$ satisfies the BCQ associated to $\kappa$,  $\mc{Q}^\kappa \leftarrow P_1(\bar{x}_1), \ldots, P_l(\bar{x}_l)$, and $S$ is subset-minimal for this property.\footnote{More technically, each DC $\kappa\!: \ \neg \exists \bar{x}(P_1(\bar{x}_1) \wedge \ldots \wedge P_l(\bar{x}_l) \wedge \ldots)$ gives rise to  conjunctive queries $\mc{Q}^\kappa_{P_l}(\bar{x}_l) \leftarrow P_1(\bar{x}_1), \ldots, P_l(\bar{x}_l), \ldots$. A tuple $P(\bar{a})$ participates in the violation of $\kappa$ if $\bar{a}$ is an answer
to $\mc{Q}^\kappa_{P}(\bar{x})$.} A C-repair turns out to be the complement of a minimum-size vertex cover for the conflict-hypergraph; equivalently, of a minimum-size hitting-set for the set of hyperedges;  or, equivalently, a maximum-size independent set of $\nit{CG}(D,\Sigma)$.\boxtheorem
\end{remark}

Towards establishing  that Theorem \ref{thm:fp} still holds for FDs, we first observe:

 \begin{lemma} \label{lemma:max} \em There is a fixed relational schema and a set of FDs $\Sigma$, such that verifying for an instance $D$ if the conflict-graph $\nit{CG}(D,\Sigma)$ has an independent set of size $k$ is \nit{NP}-complete in the size of $D$.\boxtheorem
 \end{lemma}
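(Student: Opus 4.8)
First I would note that, since an FD is violated only by \emph{pairs} of tuples, every hyperedge of $\nit{CG}(D,\Sigma)$ has exactly two vertices, so $\nit{CG}(D,\Sigma)$ is an ordinary undirected graph and the statement is literally about independent sets in a graph. Membership in \nit{NP} is then immediate: guess $k$ tuples and verify in polynomial time that no two of them jointly violate an FD of $\Sigma$. For hardness I would reduce from \emph{3-dimensional matching} (3DM), a classical \nit{NP}-complete problem: given pairwise-disjoint sets $X,Y,Z$ with $|X|=|Y|=|Z|=q$ and a set $T\subseteq X\times Y\times Z$ of triples, decide whether $T$ contains a perfect matching, i.e.\ $q$ triples that are pairwise disjoint in every coordinate.

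The key idea is to fix the schema $R(A,B,C,E)$ together with the three FDs
\[
\Sigma \;=\; \{\,A\rightarrow E,\; B\rightarrow E,\; C\rightarrow E\,\}.
\]
Given a 3DM instance $T$, I would build the instance $D_T$ that contains, for each triple $(x,y,z)\in T$, exactly one tuple $R(x,y,z,e_{(x,y,z)})$, where the last value $e_{(x,y,z)}$ is chosen \emph{fresh and distinct} for every triple. Because all $E$-values are pairwise different, two tuples violate one of the FDs in $\Sigma$ precisely when they agree on their $A$-, $B$- or $C$-value; that is, two triples conflict in $\nit{CG}(D_T,\Sigma)$ exactly when they share a coordinate. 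Hence the independent sets of $\nit{CG}(D_T,\Sigma)$ are exactly the subsets of $T$ that are pairwise disjoint in all three coordinates, i.e.\ the 3-dimensional matchings. Taking $k:=q$, the graph $\nit{CG}(D_T,\Sigma)$ has an independent set of size $k$ if and only if $T$ admits a perfect matching. The reduction is plainly polynomial, the schema and $\Sigma$ are fixed, and $|D_T|=|T|$, so the hardness is in the size of $D$.

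The step I expect to require the most care is the choice of the FDs, and this is where a naive plan fails. Trying to realise an \emph{arbitrary} graph directly as a conflict graph runs into the fact that, for a single FD, ``agreeing on the left-hand side'' is an equivalence relation, so one FD can only yield a disjoint union of (complete multipartite) cliques; and with two FDs of the form $A\rightarrow E,\ B\rightarrow E$ the conflict graph becomes a ``rook-type'' graph in which an independent set is a partial permutation, so its maximum is computable by bipartite matching and the problem stays polynomial. The essential observation is that \emph{three} FDs sharing a common right-hand side lift the problem to three coordinates, where avoiding every conflict forces one to pick triples disagreeing in all coordinates --- precisely 3-dimensional matching. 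I would finally double-check the two easily overlooked points that make the correspondence exact: distinct $E$-values guarantee that every coordinate coincidence is a genuine violation, so no conflict is accidentally suppressed, and tuples differing in all of $A,B,C$ create no violation at all, so no spurious edge is introduced.
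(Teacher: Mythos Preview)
Your proof is correct and takes a genuinely different route from the paper. The paper fixes the schema $C(\nit{clause},\nit{variable},\nit{sign})$ with the two FDs $\nit{clause}\rightarrow\nit{variable}$ and $\nit{variable}\rightarrow\nit{sign}$, and reduces from the problem of deciding whether $k$ clauses of a 3-CNF formula are simultaneously satisfiable: it builds one tuple per literal occurrence, argues that the \emph{complement} of the resulting conflict graph is isomorphic to the standard clause-literal graph used in the textbook SAT-to-Clique reduction, and then uses that cliques in the complement are independent sets in $\nit{CG}(D,\Sigma)$. Your reduction from 3-dimensional matching is more direct: with three FDs sharing a right-hand side, the conflict relation is exactly ``share a coordinate'', so independent sets \emph{are} 3D matchings without any detour through complementation. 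Your discussion of why one or two FDs do not suffice (equivalence-class cliques, then rook graphs solvable by bipartite matching) is a nice structural insight that the paper does not make explicit. The trade-off is that the paper needs only two FDs on a ternary predicate, while you use three FDs on a quaternary one; on the other hand, the paper's argument has to pass through the complement graph and invoke the SAT--Clique correspondence, whereas yours establishes the bijection between independent sets and matchings in a single line.
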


 \begin{corollary} \label{cor:fds} \em There is a fixed relational schema and a set of FDs $\Sigma$, such that verifying for a database instance $D$ if it has a C-repair of size at least $k$ is \nit{NP}-complete in the size of $D$.\boxtheorem
 \end{corollary}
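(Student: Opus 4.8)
The plan is to derive this directly from Lemma~\ref{lemma:max}, using the same fixed schema and set of FDs, together with the characterization of C-repairs recalled in Remark~\ref{rem:cg}. First I would observe that, since FDs are binary DCs (a violation always involves exactly two tuples), the conflict-hypergraph $\nit{CG}(D,\Sigma)$ is an ordinary graph, and a subset $S \subseteq D$ is an independent set of $\nit{CG}(D,\Sigma)$ exactly when $S$ is consistent, i.e. $S \models \Sigma$. By Remark~\ref{rem:cg}, a C-repair is precisely a maximum-size independent set of $\nit{CG}(D,\Sigma)$, and since all C-repairs share the same cardinality, that common size equals the independence number $\alpha(\nit{CG}(D,\Sigma))$.

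The key reduction step is then the equivalence: $D$ has a C-repair of size at least $k$ iff $\alpha(\nit{CG}(D,\Sigma)) \geq k$ iff $\nit{CG}(D,\Sigma)$ has an independent set of size $k$. The last equivalence uses that any subset of an independent set is again independent, so an independent set of size at least $k$ can always be shrunk to one of size exactly $k$. This ties the decision problem of the corollary to the problem shown \nit{NP}-complete in Lemma~\ref{lemma:max}, on the very same schema and FDs.

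For membership in \nit{NP}, I would guess a subset $S \subseteq D$ with $|S| = k$ and verify in polynomial time (in $|D|$) that $S$ is consistent, i.e. that no pair of tuples in $S$ violates an FD of $\Sigma$; by the observation above this certifies a C-repair of size at least $k$. Note that this avoids having to check maximality of a candidate repair: it is exactly the coincidence of all C-repair sizes with $\alpha(\nit{CG}(D,\Sigma))$ that lets existence of a large C-repair be witnessed by an arbitrary consistent subset of the required size. For \nit{NP}-hardness, I would invoke Lemma~\ref{lemma:max} through the equivalence just established, since deciding whether $\nit{CG}(D,\Sigma)$ has an independent set of size $k$ is the same question as deciding whether $D$ has a C-repair of size at least $k$.

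I do not expect a serious obstacle here; the content is essentially the translation dictionary ``independent set $=$ consistent subset $=$ sub-instance of a C-repair.'' The two points to state carefully are the at-least-$k$ versus exactly-$k$ interchange, and the fact that the fixed schema and FD set from Lemma~\ref{lemma:max} carry over unchanged, so that the hardness is genuinely in data complexity.
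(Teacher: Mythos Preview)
Your proposal is correct and follows essentially the same approach as the paper: both reduce directly to Lemma~\ref{lemma:max} via the equivalence ``$D$ has a C-repair of size at least $k$ iff $\nit{CG}(D,\Sigma)$ has an independent set of size at least $k$,'' using Remark~\ref{rem:cg}. Your version is more explicit about NP membership and about the at-least-$k$ versus exactly-$k$ interchange, but these are details the paper leaves implicit rather than a different route.
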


 \begin{theorem} \label{thm:fixed}\em
 There is a fixed relational schema and a set $\Sigma$ of two FDs, such that computing $\mbox{\nit{inc-deg}}^{c,g_3\!}(D,\Sigma)$  is  $\nit{FP}^{\nit{NP(log(n))}}$-complete in data complexity.\boxtheorem
 \end{theorem}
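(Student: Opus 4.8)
The membership of the problem in $\nit{FP}^{\nit{NP(log(n))}}$ is immediate and needs no new work: every FD is a DC, so computing $\mbox{\nit{inc-deg}}^{c,g_3\!}(D,\Sigma)$ for FDs is a special case of the upper bound already established in Theorem \ref{thm:fp}. Hence the entire argument reduces to proving $\nit{FP}^{\nit{NP(log(n))}}$-hardness for one fixed schema and one fixed set of FDs, and the natural candidate is exactly the schema $C(\nit{clause},\nit{variable},\nit{sign})$ together with the two FDs $\nit{clause}\rightarrow\nit{variable}$ and $\nit{variable}\rightarrow\nit{sign}$ used in Lemma \ref{lemma:max}.

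The plan is to reduce from the problem of computing the maximum number of simultaneously satisfiable clauses of a CNF formula (the optimal value of \textsc{Max-Sat}), which is $\nit{FP}^{\nit{NP(log(n))}}$-complete under metric reductions by a classical result of Krentel. Given $\psi = c_1 \wedge \cdots \wedge c_m$, I would apply the construction of Lemma \ref{lemma:max} verbatim to obtain the instance $D$ whose tuples are the literal-occurrences of $\psi$, so that $|D|$ equals the total number of such occurrences (in particular $|D| = 3m$ for $3$-CNF, and the construction works unchanged for clauses of arbitrary width). The decisive claim, which upgrades Corollary \ref{cor:fds} from a threshold statement to an exact-value statement, is that $\nit{max}\{|D'| : D' \in \nit{Crep}(D,\Sigma)\}$ equals the maximum number of clauses of $\psi$ that are simultaneously satisfiable. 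By Remark \ref{rem:cg} this maximum coincides with the size of a largest independent set of $\nit{CG}(D,\Sigma)$, and the analysis already carried out in Lemma \ref{lemma:max} shows that independent sets of the conflict graph are precisely the consistent selections of at most one literal per clause; extending such a selection to a full truth assignment satisfies exactly the clauses it touches, and conversely every assignment yields such an independent set. This gives the exact correspondence between the two optima, which is the crux of the reduction.

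Once the correspondence is in place, reading off the measure completes the reduction. Writing $V$ for this common optimal value, we have $\mbox{\nit{inc-deg}}^{c,g_3\!}(D,\Sigma) = \frac{|D| - V}{|D|}$, and since $|D|$ is computable in polynomial time directly from $\psi$, one recovers $V = |D|\,(1 - \mbox{\nit{inc-deg}}^{c,g_3\!}(D,\Sigma))$ by a single polynomial-time post-processing step. Thus the map $\psi \mapsto D$ together with this inversion is a polynomial-time metric reduction from the \textsc{Max-Sat} value to the computation of $\mbox{\nit{inc-deg}}^{c,g_3\!}(D,\Sigma)$, so the latter is $\nit{FP}^{\nit{NP(log(n))}}$-hard; combined with the membership noted above, it is $\nit{FP}^{\nit{NP(log(n))}}$-complete.

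I expect the main obstacle to be precisely the middle step: Lemma \ref{lemma:max} and Corollary \ref{cor:fds} are phrased as a decision/threshold equivalence, whereas completeness for a \emph{functional} class requires that the reduction be value-preserving, i.e. that a maximum C-repair have size \emph{exactly} the \textsc{Max-Sat} optimum rather than merely crossing the threshold $k$ at the same place. Some care is also needed with the implicit normalization hypotheses on the formula — that clauses contain no complementary pair and no repeated variable — so that each clause of width $w$ contributes exactly $w$ distinct tuples, $|D|$ is known on the nose, and the division by $|D|$ remains exactly invertible.
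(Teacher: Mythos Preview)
Your proposal is correct and follows essentially the same route as the paper: both arguments derive hardness by reducing Krentel's $\nit{FP}^{\nit{NP(log(n))}}$-complete \textsc{Max-Sat} value problem through the construction of Lemma~\ref{lemma:max}, using that the maximum independent set in $\nit{CG}(D,\Sigma)$ (equivalently, the maximum clique in its complement) equals the maximum number of simultaneously satisfiable clauses. The only minor difference is in the membership argument---the paper invokes Corollary~\ref{cor:fds} together with binary search, whereas you appeal directly to Theorem~\ref{thm:fp} via the inclusion of FDs among DCs; both are valid, and yours is arguably the more economical route.
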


 From this result we obtain that computing  the $\mbox{\nit{inc-deg}}^{s,g_3\!}$ measure is $\nit{FP}^{\nit{NP(log(n))}}$-complete in data complexity.  As claimed in \cite[page 132]{mannila}, it can be computed in $O(\nit{sort}(R(D)))$ for a single FD, where $\nit{sort}(R(D)$ is the time it takes to sort relation $R$ in $D$. However, as  Theorem \ref{thm:fixed} states, the complexity can be higher already for two FDs. It is interesting to highlight that in \cite{benny} it is established that if a set of FDs is ``simplifiable", then a C-repair can be computed in polynomial time. Clearly if we can build such a repair, we can immediately compute the inconsistency measure (one C-repair suffices), and in polynomial time. As expected, the set of FDs in Theorem \ref{thm:fixed}, being of the form $\{A \rightarrow B, \ B \rightarrow C\}$ is not simplifiable.  

  Despite the high-complexity results above,
 there is a good polynomial-time algorithm, $\mbox{\nit{appID}}$, that  approximates $\mbox{\nit{inc-deg}}^{c,g_3\!}(D,\Sigma)$.

 \begin{theorem} \label{thm:det}\em
 There is a polynomial-time, deterministic algorithm that returns $\mbox{\nit{appID}}(D,$ $\Sigma)$, an approximation to $\mbox{\nit{inc-deg}}^{c,g_3\!}(D,\Sigma)$, within the constant factor $d$ that is the maximum number of atoms in a DC in $\Sigma$, i.e. \ $\mbox{\nit{appID}}(D,\Sigma)  \leq d \times  \mbox{\nit{inc-deg}}^{c,g_3\!}(D,\Sigma)$.\boxtheorem
 \end{theorem}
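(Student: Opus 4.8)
The plan is to reduce the problem to approximating a minimum hitting set in the conflict-hypergraph $\nit{CG}(D,\Sigma)$ of Remark \ref{rem:cg}, and then to invoke the classical greedy $d$-approximation for the $d$-hitting-set problem. By Remark \ref{rem:cg}, the numerator of $\mbox{\nit{inc-deg}}^{c,g_3\!}(D,\Sigma)$, namely $\nit{min}\{|D \smallsetminus D'| : D' \in \crep\}$, equals the size $\tau^*$ of a minimum hitting set (vertex cover) of $\nit{CG}(D,\Sigma)$, whose hyperedges all contain at most $d$ tuples. Hence $\mbox{\nit{inc-deg}}^{c,g_3\!}(D,\Sigma) = \tau^*/|D|$, and it suffices to produce in polynomial time a hitting set whose size is within a factor $d$ of $\tau^*$.

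First I would build $\nit{CG}(D,\Sigma)$ explicitly. Since the schema (and thus $d$) is fixed, and each hyperedge is a subset-minimal set of at most $d$ tuples jointly violating some DC, there are at most $O(|D|^d)$ candidate hyperedges, each testable in constant time; so the hypergraph is constructed in polynomial time in $|D|$, i.e. in data complexity.

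Next I would run the standard ``maximal disjoint edges'' procedure: repeatedly pick any hyperedge $e$ that is not yet hit, add \emph{all} of its (at most $d$) vertices to an output set $S$, and discard every hyperedge now containing a vertex of $S$; repeat until no hyperedge remains. Let $e_1,\ldots,e_k$ be the hyperedges picked. At the moment $e_i$ is picked it is unhit, hence disjoint from the vertices already in $S$, so by induction the $e_i$ are pairwise disjoint. The output $S = \bigcup_i V(e_i)$ is a hitting set, because the loop terminates only when every edge is hit, giving $\tau^* \le |S|$. Conversely, any hitting set must contain at least one vertex from each of the disjoint $e_i$, so $\tau^* \ge k$; and $|S| \le \sum_i |e_i| \le d\,k \le d\,\tau^*$. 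I would then define $\mbox{\nit{appID}}(D,\Sigma) := |S|/|D|$, which satisfies $\mbox{\nit{inc-deg}}^{c,g_3\!}(D,\Sigma) \le \mbox{\nit{appID}}(D,\Sigma) \le d \times \mbox{\nit{inc-deg}}^{c,g_3\!}(D,\Sigma)$, establishing the claimed bound (and in fact also an overestimation guarantee from below).

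The only real obstacle is bookkeeping rather than mathematical depth: one must confirm that the correspondence of Remark \ref{rem:cg} makes $\tau^*$ \emph{exactly} the numerator of the measure, and that the hypergraph construction stays polynomial in data complexity. Both hold precisely because $d$ is a constant fixed by the schema. The approximation analysis itself is the textbook disjoint-edge argument and presents no difficulty.
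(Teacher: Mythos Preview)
Your argument is correct, but the route differs from the paper's. The paper also reduces, via Remark~\ref{rem:cg}, to approximating a minimum vertex cover in the conflict-hypergraph with hyperedges of size at most $d$, but then invokes the LP-relaxation method (integer programming relaxed to linear programming, citing Hochbaum and Bar-Yehuda) to obtain the $d$-ratio. You instead give the elementary combinatorial argument: greedily build a maximal set of pairwise disjoint hyperedges and take the union of their vertices. Your proof is self-contained and requires no appeal to linear programming; the LP route, on the other hand, is the one that generalizes more readily to weighted variants and is the natural bridge to the local-ratio technique the paper cites. Both yield the same $d$-factor guarantee, and your observation that $\mbox{\nit{appID}}(D,\Sigma)$ is also an \emph{upper} bound on $\mbox{\nit{inc-deg}}^{c,g_3\!}(D,\Sigma)$ (since $S$ is a genuine hitting set) is a small bonus the paper does not make explicit.
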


Since for FDs conflict hypergraphs become conflict graphs, we immediately  obtain:

\begin{corollary}
\em For $\Sigma$ a set of FDs, $\mbox{\nit{appID}}(D,\Sigma)$ is a polynomial-time 2-approximation for $\mbox{\nit{inc-deg}}^{c,g_3\!}(D,\Sigma)$, i.e. \ $\mbox{\nit{appID}}(D,\Sigma) \leq 2 \times \mbox{\nit{inc-deg}}^{c,g_3\!}(D,\Sigma)$. \boxtheorem
 \end{corollary}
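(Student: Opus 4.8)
The plan is to obtain this corollary as an immediate specialization of the preceding theorem (the deterministic $d$-ratio approximation result) to the particular value of the parameter $d$ that arises when $\Sigma$ consists only of FDs. The preceding theorem already supplies, for any set of DCs, a polynomial-time deterministic algorithm returning $\mbox{\nit{appID}}(D,\Sigma)$ with $\mbox{\nit{appID}}(D,\Sigma) \leq d \times \mbox{\nit{inc-deg}}^{c,g_3\!}(D,\Sigma)$, where $d$ is the maximum number of atoms in a DC of $\Sigma$. So the whole task reduces to pinning down $d$ in the FD case and invoking that theorem.

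First I would observe that an FD, written as the denial constraint $\varphi\!: \neg \exists \bar{x}(P(\bar{v},\bar{y}_1,z_1) \wedge P(\bar{v},\bar{y}_2,z_2) \wedge z_1 \neq z_2)$, contains exactly two relational atoms (the inequality $z_1 \neq z_2$ being a built-in comparison, not a database atom). Hence, when $\Sigma$ is a set of FDs, the parameter $d$ equals $2$. Next I would appeal to Remark \ref{rem:cg}: the hyperedges of $\nit{CG}(D,\Sigma)$ are bounded in size by $d$, and since each FD violation is witnessed by exactly two tuples, every subset-minimal hyperedge has exactly two vertices. Consequently the conflict-hypergraph degenerates into an ordinary graph, and the underlying optimization problem is precisely minimum vertex cover on a graph, whose complement (within $D$) yields a C-repair and hence the value $\nit{max}\{|D'| : D' \in \crep\}$ needed for the measure.

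Finally I would substitute $d=2$ into the bound of the preceding theorem, obtaining $\mbox{\nit{appID}}(D,\Sigma) \leq 2 \times \mbox{\nit{inc-deg}}^{c,g_3\!}(D,\Sigma)$, which is the claimed polynomial-time $2$-approximation. I do not expect any real obstacle here, since the corollary is a direct instantiation; the only point worth checking is that the general $d$-ratio vertex-cover approximation cited from \cite{hochbaum,yehuda} indeed collapses to the classical $2$-approximation for minimum vertex cover on ordinary graphs when hyperedges have size two, which is standard. The one subtlety I would keep in mind is to confirm that the approximation is stated for the quantity being approximated (the minimum number of deletions, normalized by $|D|$) rather than for the complementary maximum-size C-repair, so that the multiplicative factor applies on the correct side of the equation.
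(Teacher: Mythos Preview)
Your proposal is correct and mirrors the paper's own justification exactly: the paper simply notes that ``for FDs conflict hypergraphs become conflict graphs'' and states the corollary as immediate, which is precisely your observation that each FD has two relational atoms, so $d=2$ in the preceding theorem. The additional checks you mention (that the LP-relaxation bound specializes to the classical $2$-approximation, and that the ratio applies to the normalized deletion count) are sound sanity checks but go beyond what the paper itself spells out.
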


Another approach to the approximate computation of the inconsistency measure is based on randomization applied to a relaxed, linear-programming version of the hitting-set (HS) problem  for the set of $d$-bounded hyperedges  (or, equivalently, as vertex-covers in hypergraphs with $d$-bounded hyperedges). In our case, this occurs when each of the DCs in $\Sigma$ has a number of atoms bounded by $d$. In this case, we say $\Sigma$ is $d$-bounded, and the hyperedges in the conflict-hypergraph have all size at most $d$. \
The algorithm in \cite{random} returns a ``small", possibly non-minimum HS, which in our case is a set of database tuples whose removal from $D$ restores consistency. The size of this HS approximates the numerator of the inconsistency measure.

\begin{proposition} \label{prop:random} \em
There is a polynomial-time,  randomized algorithm that approximates $\mbox{\nit{inc-deg}}^{c,g_3\!}(D,\Sigma)$ within a $d$-ratio and with probability
$\frac{3}{5}$. \boxtheorem
\end{proposition}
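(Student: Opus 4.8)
The plan is to reduce the computation of $\mbox{\nit{inc-deg}}^{c,g_3\!}(D,\Sigma)$ to the minimum hitting-set problem on the conflict-hypergraph $\nit{CG}(D,\Sigma)$ of Remark \ref{rem:cg}, and then invoke a randomized approximation algorithm for hitting sets in $d$-bounded hypergraphs. Recall that, by Remark \ref{rem:cg} and the proof of Theorem \ref{theo:fpt}, the numerator $\nit{min}\{|D\smallsetminus D'| : D' \in \crep\}$ equals the size of a minimum hitting set (equivalently, minimum vertex cover) for the hyperedges of $\nit{CG}(D,\Sigma)$, and these hyperedges have size bounded above by $d$, the maximum number of atoms in a DC in $\Sigma$. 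So approximating the inconsistency measure within ratio $d$ amounts to producing a hitting set whose size is within a factor $d$ of the optimum.

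First I would set up the standard integer program for minimum hitting set: a $0/1$ variable $x_\tau$ for each tuple (vertex) $\tau$, one covering constraint $\sum_{\tau \in e} x_\tau \geq 1$ per hyperedge $e$, and objective $\min \sum_\tau x_\tau$. Relaxing $x_\tau \in \{0,1\}$ to $x_\tau \in [0,1]$ yields an LP that is solvable in polynomial time; let $\{x_\tau^*\}$ be an optimal fractional solution with value $\nit{OPT}_{\text{LP}} \le \nit{OPT}$, where $\nit{OPT}$ is the true numerator. The next step is randomized rounding in the style of the algorithm in \cite{random}: round each $x_\tau$ up to $1$ independently with a probability that is an amplified multiple of $x_\tau^*$ (scaled so the per-edge failure probability is small), collect the chosen vertices into a candidate hitting set $H$, and bound both its expected size by $O(d)\cdot\nit{OPT}_{\text{LP}}$ and the probability that $H$ fails to hit some hyperedge. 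Because each hyperedge has at most $d$ vertices and its fractional coverage is at least $1$, a short union-bound argument over the (polynomially many) hyperedges shows that with the stated scaling the rounded set is a valid hitting set of size at most $d \cdot \nit{OPT}$ with probability at least $\frac{3}{5}$; this is exactly the guarantee provided by \cite{random}. Dividing the size $|H|$ by $|D|$ then gives a value within a $d$-ratio of $\mbox{\nit{inc-deg}}^{c,g_3\!}(D,\Sigma)$, since the denominator $|D|$ is the same on both sides.

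The main obstacle is bookkeeping the probabilistic guarantee so that the ratio $d$ and the success probability $\frac{3}{5}$ come out simultaneously from a single rounding, rather than needing independent repetitions or probability amplification that would degrade one bound to improve the other. The delicate point is balancing the rounding probability: rounding too aggressively inflates the expected solution size (hurting the $d$-ratio), while rounding too timidly raises the chance that some hyperedge is left uncovered (hurting the $\frac{3}{5}$ success probability). The cleanest route is to cite the algorithm and its analysis in \cite{random} directly, which is tailored to $d$-bounded hypergraphs and already delivers precisely the $d$-ratio with constant success probability; the only remaining work is to verify that our conflict-hypergraph instance meets its hypotheses, namely that hyperedges are bounded in size by $d$ and that the instance is constructible in polynomial time in $|D|$ for a fixed schema and fixed $\Sigma$. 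Both hold by Remark \ref{rem:cg}, so the result follows.

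\begin{proof}
By Remark \ref{rem:cg}, the numerator $\nit{min}\{|D\smallsetminus D'| : D' \in \crep\}$ of $\mbox{\nit{inc-deg}}^{c,g_3\!}(D,\Sigma)$ equals the size of a minimum hitting set for the hyperedges of the conflict-hypergraph $\nit{CG}(D,\Sigma)$, whose hyperedges are bounded in size by $d$, the maximum number of atoms in a DC of $\Sigma$. For a fixed schema and fixed $\Sigma$, this hypergraph is constructible in polynomial time in $|D|$. We apply the randomized algorithm of \cite{random}, which, given a hypergraph with hyperedges of size at most $d$, outputs in polynomial time a hitting set whose size is within a factor $d$ of the minimum, with probability at least $\frac{3}{5}$. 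Concretely, one solves the linear-programming relaxation of the natural $0/1$ hitting-set program, obtaining a fractional optimum of value $\nit{OPT}_{\text{LP}} \le \nit{OPT}$, and then performs randomized rounding; a union bound over the polynomially many hyperedges, using that each has at most $d$ vertices with total fractional coverage at least $1$, yields a valid hitting set of size at most $d \cdot \nit{OPT}$ with the stated probability. Dividing the returned size by $|D|$ gives $\mbox{\nit{appID}}(D,\Sigma)$ within a $d$-ratio of $\mbox{\nit{inc-deg}}^{c,g_3\!}(D,\Sigma)$, with probability $\frac{3}{5}$, since the denominator $|D|$ is common to both quantities.
\end{proof}
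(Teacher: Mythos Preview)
Your proposal is correct and follows essentially the same approach as the paper: the paper does not give a formal proof of this proposition but simply observes in the preceding paragraph that the randomized LP-relaxation algorithm of \cite{random} for hitting sets in $d$-bounded hypergraphs can be applied to the conflict-hypergraph $\nit{CG}(D,\Sigma)$, yielding the stated ratio and success probability for the numerator and hence for the measure. Your write-up makes this explicit and adds the LP-rounding sketch, which is precisely the mechanism of \cite{random}; nothing more is needed.
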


Notice that $d$ in this result is determined by the fixed
set of DCs, and does not depend on $D$. \
Actually, as shown in \cite{random}, the ratio of the algorithm can be improved to $(d - \frac{8}{\Delta})$, where $\Delta \leq \frac{1}{4}|D|^{\frac{1}{4}}$ is the maximum degree of a vertex, i.e. in our case the maximum number of tuples that co-violate a DC (possibly in company of other tuples) with any fixed tuple.\footnote{It is known that there is no polynomial-time approximation with ratio of the form $(d - \epsilon)$ for any constant $\epsilon$ \cite{khot}.} As above, for conflict-graphs associated for example to FDs, $d =2$.

\vspace{-2mm}
\section{Inconsistency Degree under Updates}\label{sec:updates}\vspace{-2mm}

Let us assume we have a $\mbox{\nit{inc-deg}}^{s,g_3}(D,\Sigma)$ for an instance $D$ and a set of DCs $\Sigma$. If, possibly virtually or hypothetically for exploration purposes, we  insert $m$ new tuples into $D$, the resulting instance, $D'$, may suffer from more IC violations than $D$. The question is how much can the inconsistency measure change. The next results tell us that there are no unexpected jumps in inconsistency degree. They can be seen as reflecting {\em continuity} properties of the inconsistency measure.

\begin{proposition}\label{prop:cont1} \em
Given an instance $D$ and a set $\Sigma$ of DCs, if $\epsilon \times |D|$ new tuples are added to $D$, with $0 < \epsilon <1$, obtaining instance $D'$, then  $\mbox{\nit{inc-deg}}^{c,g_3}(D',\Sigma) \leq \mbox{\nit{inc-deg}}^{c,g_3}(D,\Sigma) + \frac{1}{1 + \frac{1}{\epsilon}}$. \ Furthermore, \ $\mbox{\nit{inc-deg}}^{c,g_3}(D,\Sigma) \leq \frac{1}{1-\epsilon} \times \mbox{\nit{inc-deg}}^{c,g_3}(D',\Sigma)$.\boxtheorem
\end{proposition}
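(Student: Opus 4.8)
The plan is to reduce everything to the numerator of the measure, namely the minimum number of tuple deletions needed to restore consistency. Writing $n := |D|$ and $m := \epsilon \cdot n$ for the number of added tuples, so that $|D'| = n + m = (1+\epsilon)n$, I would let $h(D)$ denote the size of a minimum hitting set for the conflict-hypergraph $\nit{CG}(D,\Sigma)$ of Remark \ref{rem:cg}. By that remark a C-repair is the complement of a minimum hitting set, so $|D| - \nit{max}\{|D''| : D'' \in \nit{Crep}(D,\Sigma)\} = h(D)$, and hence $\mbox{\nit{inc-deg}}^{c,g_3}(D,\Sigma) = h(D)/n$ and $\mbox{\nit{inc-deg}}^{c,g_3}(D',\Sigma) = h(D')/(n+m)$. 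The whole statement then becomes an assertion about the two ratios $h(D)/n$ and $h(D')/(n+m)$.

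The core of the argument is a pair of ``Lipschitz'' bounds: $h(D) \le h(D') \le h(D) + m$. For the upper bound I would take a minimum hitting set $H$ of $\nit{CG}(D,\Sigma)$ and observe that $H \cup (D' \smallsetminus D)$ is a hitting set of $\nit{CG}(D',\Sigma)$: any hyperedge contained in $D$ is already hit by $H$, while any hyperedge meeting $D' \smallsetminus D$ is hit by that added tuple; this gives $h(D') \le h(D) + m$. Here I use that, because $D \subseteq D'$ and for DCs the violation of a constraint by a set is an intrinsic property of the set (the monotone case discussed in Section \ref{sec:prel}), the hyperedges of $\nit{CG}(D,\Sigma)$ are exactly the hyperedges of $\nit{CG}(D',\Sigma)$ lying inside $D$. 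For the lower bound I would take a minimum hitting set $H'$ of $\nit{CG}(D',\Sigma)$ and note that $H' \cap D$ still hits every hyperedge inside $D$, since such a hyperedge consists only of tuples of $D$ and so can only be hit from within $D$; this gives $h(D) \le |H' \cap D| \le h(D')$.

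With these two bounds both displayed inequalities follow by elementary algebra. For the first, $\mbox{\nit{inc-deg}}^{c,g_3}(D',\Sigma) = h(D')/(n+m) \le (h(D)+m)/(n+m) = h(D)/(n+m) + m/(n+m)$; since $h(D)/(n+m) \le h(D)/n$ and $m/(n+m) = \epsilon n/((1+\epsilon)n) = 1/(1 + \tfrac{1}{\epsilon})$, this is at most $\mbox{\nit{inc-deg}}^{c,g_3}(D,\Sigma) + 1/(1 + \tfrac{1}{\epsilon})$. For the second, using only $h(D) \le h(D')$ I would reduce to checking $1/n \le 1/((1-\epsilon)(n+m))$, i.e. $(1-\epsilon)(n+m) \le n$; and indeed $(1-\epsilon)(n+m) = (1-\epsilon)(1+\epsilon)n = (1-\epsilon^2)n \le n$, which yields $\mbox{\nit{inc-deg}}^{c,g_3}(D,\Sigma) = h(D)/n \le h(D')/n \le \tfrac{1}{1-\epsilon}\,\mbox{\nit{inc-deg}}^{c,g_3}(D',\Sigma)$.

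The main obstacle — really the only non-routine step — is establishing the lower bound $h(D) \le h(D')$ cleanly, i.e. justifying that a minimum hitting set for the enlarged instance, once restricted to $D$, still covers all conflicts internal to $D$. This rests on the two observations that hyperedges of $\nit{CG}(D,\Sigma)$ persist as hyperedges of $\nit{CG}(D',\Sigma)$ and that they can only be hit from within $D$; everything after that is arithmetic with $m = \epsilon n$.
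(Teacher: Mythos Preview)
Your proof is correct and follows essentially the same approach as the paper, just phrased in the dual language of minimum hitting sets rather than C-repairs: your bound $h(D')\le h(D)+m$ via $H\cup(D'\smallsetminus D)$ is exactly the paper's ``delete the new violating tuples, then repair $D$'' step, and your bound $h(D)\le h(D')$ via $H'\cap D$ is the paper's ``$D^\star\smallsetminus D_k$ is a consistent sub-instance of $D$'' step. Isolating the pair of Lipschitz inequalities $h(D)\le h(D')\le h(D)+m$ before doing the arithmetic is a clean way to organize the argument.
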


When tuples are deleted, the number of DC violations can only decrease, but also the reference size of the database decreases. However, the inconsistency degree stays within a tight upper bound.
\begin{proposition} \label{prop:cont2} \em
Given an instance $D$ and a set $\Sigma$ of DCs, if $\epsilon \times |D|$ tuples are deleted from $D$, with $0 < \epsilon <1$, obtaining instance $D'$, then  $\mbox{\nit{inc-deg}}^{c,g_3}(D',\Sigma) \leq \frac{1}{1-\epsilon} \times \mbox{\nit{inc-deg}}^{c,g_3}(D,\Sigma)$. \ Furthermore, $\mbox{\nit{inc-deg}}^{c,g_3}(D,\Sigma) \leq \frac{1}{1-\epsilon} \times \mbox{\nit{inc-deg}}^{c,g_3}(D',\Sigma) + \epsilon$;  and the last term can be eliminated if the deleted tuples did not participate in DC violations in $D$.\boxtheorem
\end{proposition}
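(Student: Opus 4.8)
The plan is to reduce both inequalities to a comparison of minimum hitting-set sizes in the conflict-hypergraph of Remark \ref{rem:cg}. Writing $D = D' \,\stackrel{.}{\cup}\, D_k$, where $D_k$ is the deleted set with $|D_k| = \epsilon|D|$, so that $|D'| = (1-\epsilon)|D|$, I would first record that $\mbox{\nit{inc-deg}}^{c,g_3}(D,\Sigma) = r(D)/|D|$, where $r(D) := \nit{min}\{|D\smallsetminus D''| : D'' \in \nit{Crep}(D,\Sigma)\}$ is exactly the size of a minimum hitting set (equivalently, minimum vertex cover) of $\nit{CG}(D,\Sigma)$, and likewise for $D'$. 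The key structural observation is that, by monotonicity of DCs, $\nit{CG}(D',\Sigma)$ is precisely the sub-hypergraph of $\nit{CG}(D,\Sigma)$ induced on the vertex set $D'$: every subset-minimal violation contained in $D'$ is already one in $D$, and conversely a minimal violation of $D$ avoiding $D_k$ is a minimal violation of $D'$.

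For the first inequality I would take a minimum hitting set $H$ of $\nit{CG}(D,\Sigma)$ and note that $H \cap D'$ hits every hyperedge of $\nit{CG}(D',\Sigma)$, since each such hyperedge lies inside $D'$ and must be hit by some vertex of $H$. Hence $r(D') \le |H\cap D'| \le r(D)$, and dividing by $|D'| = (1-\epsilon)|D|$ gives $\mbox{\nit{inc-deg}}^{c,g_3}(D',\Sigma) = r(D')/((1-\epsilon)|D|) \le \frac{1}{1-\epsilon}\cdot r(D)/|D|$, as required. For the second inequality I would argue in the opposite direction: if $H'$ is a minimum hitting set of $\nit{CG}(D',\Sigma)$, then $H' \cup D_k$ hits every hyperedge of $\nit{CG}(D,\Sigma)$, because a hyperedge either lies in $D'$ (and is hit by $H'$) or meets $D_k$. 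Thus $r(D) \le r(D') + |D_k| = r(D') + \epsilon|D|$, and dividing by $|D|$ yields $\mbox{\nit{inc-deg}}^{c,g_3}(D,\Sigma) \le r(D')/|D| + \epsilon = (1-\epsilon)\,\mbox{\nit{inc-deg}}^{c,g_3}(D',\Sigma) + \epsilon$; since $1-\epsilon \le \frac{1}{1-\epsilon}$ for $0<\epsilon<1$, this is in turn bounded by $\frac{1}{1-\epsilon}\,\mbox{\nit{inc-deg}}^{c,g_3}(D',\Sigma) + \epsilon$, giving the stated bound (indeed a slightly stronger one).

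For the final refinement, I would observe that if no tuple of $D_k$ participates in any DC violation of $D$, then no hyperedge of $\nit{CG}(D,\Sigma)$ meets $D_k$, so the two hypergraphs have identical edge sets (the vertices of $D_k$ being isolated) and therefore $r(D) = r(D')$; the bound then collapses to $\mbox{\nit{inc-deg}}^{c,g_3}(D,\Sigma) = (1-\epsilon)\,\mbox{\nit{inc-deg}}^{c,g_3}(D',\Sigma) \le \frac{1}{1-\epsilon}\,\mbox{\nit{inc-deg}}^{c,g_3}(D',\Sigma)$, without the additive $\epsilon$. I expect the only delicate point to be justifying the induced-sub-hypergraph identity cleanly, and keeping careful track of the two normalizing denominators ($|D|$ versus $|D'|$) so that the $\frac{1}{1-\epsilon}$ factor and the $+\epsilon$ term come out with the correct orientation; the hitting-set comparisons themselves are routine once Remark \ref{rem:cg} is invoked.
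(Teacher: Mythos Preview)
Your argument is correct and is essentially the paper's proof recast in the dual hitting-set language of Remark~\ref{rem:cg}: where the paper intersects a C-repair $D^\star$ of $D$ with $D'$ to get a consistent sub-instance of $D'$, you intersect the complementary minimum hitting set $H$ with $D'$; and where the paper extends a C-repair $D^\star$ of $D'$ by the non-violating part $D_{k-k'}$ of $D_k$ to obtain a consistent sub-instance of $D$, you extend a minimum hitting set $H'$ of $\nit{CG}(D',\Sigma)$ by all of $D_k$ to obtain a hitting set of $\nit{CG}(D,\Sigma)$. The two arguments are complement-for-complement identical once one unfolds $r(D)=|D\smallsetminus D^\star|$.

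Two small remarks. First, your intermediate bound $\mbox{\nit{inc-deg}}^{c,g_3}(D,\Sigma) \le (1-\epsilon)\,\mbox{\nit{inc-deg}}^{c,g_3}(D',\Sigma) + \epsilon$ is indeed sharper than the stated one; the paper also passes through this step but then relaxes $(1-\epsilon)$ to $\frac{1}{1-\epsilon}$ in the same way you do. Second, the paper's handling of the refinement is slightly finer in that it partitions $D_k = D_{k'} \cup D_{k-k'}$ into violating and non-violating tuples and carries $|D_{k'}|$ (rather than $|D_k|$) through the bound, which makes the disappearance of the additive term when $D_{k'}=\emptyset$ immediate; your version reaches the same conclusion via $r(D)=r(D')$, which is equally valid and arguably cleaner.
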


A natural situation occurs when one has a fully consistent database $D$ wrt. a set $\Sigma$ of DCs, and one adds a set $U$ of $m$ tuples (deletions will not affect consistency). The question is about the cost of computing the inconsistency measure. Actually, it turns out that if $\Sigma$ is $d$-bounded, then computing the inconsistency measure is fixed-parameter tractable
 \cite{flum}, where the fixed parameter is $m$.

 \begin{theorem} \label{theo:fpt} \em For a fixed set of DCs $\Sigma$ that is bounded by $d$, a database $D$ that is consistent wrt. $\Sigma$, and $U$ a set of extra tuples, computing $\mbox{\nit{inc-deg}}^{c,g_3\!}(D \cup U,\Sigma)$ is {\em fixed-parameter tractable} with  parameter $m = |U|$. More precisely, there is an algorithm that computes the inconsistency measure in time $O(\nit{log}(m) \times (C^m + m N))$, where $N =|D|$, $m = |U|$, and $C$ is a constant that depends on $d$. \boxtheorem
 \end{theorem}

The complexity is exponential in the number of updates, but linear in the size of the initial database. In many situations, $m$ would be relatively small in comparison to $|D|$.
 \ In Section \ref{sec:incre} we further discuss the incremental approximate computation of the inconsistency measure.

\vspace{-2mm}
\section{Adapting $\mbox{\nit{inc-deg}}^{s,g_3}$ to attribute-based repairs}\label{sec:atr}
\vspace{-2mm}

  Database repairs that are based on changes of attribute values in tuples have been considered in \cite{wijsen,IS2008}, and implicitly in \cite{tkde}. We rely here on repairs introduced in \cite{foiks18}, which we briefly present by means of an example. (We believe the developments in this section could be applied to inconsistency measures based on repairs that update attribute values using other constants from the domain \cite{wijsen,IS2008}.)

\begin{example}  \label{ex:cause2}  For the database instance $D = \{S(a_2), S(a_3), R(a_3,a_1), R(a_3,a_4),$\linebreak $ R(a_3,a_5)\}$,  and the DC \
$\kappa: \ \neg \exists x \exists y ( S(x) \land R(x, y))$,  \ it holds \ ${D \not \models \kappa}$. \
Notice that value $a_3$  matters here in that it enables the join,   e.g. $D \models S(a_3) \wedge R(a_3,a_1)$, which could be avoided by replacing it by a null value as used in SQL databases.

More precisely, for the instance $D_1 = \{S(a_2), S(a_3), R(\nn,a_1),$ $R(\nn,a_4),$  $ R(\nn,a_5)\}$, where $\nn$ stands for the null value, which cannot be used to satisfy a join, it holds \ $D_1 \models \kappa$. \ Similarly with $D_2 = \{S(a_2), S(\nn),$$ R(a_3,a_1),$ $ R(a_3,a_4),$ $ R(a_3,a_5)\}$, and $D_3 = \{S(a_2), S(\nn), R(\nn,a_1),$ $R(\nn,a_4),$ $ R(\nn,a_5)\}$, among others obtained from $D$ through replacement of attribute values by \nn.
\boxtheorem
\end{example}

In relation to the special constant $\nn$ we assume that all atoms with built-in comparisons, say $\nn \ \theta \ \nn$, and $\nn \ \theta \ c$, with $c$ a non-null constant, are all false for $\theta \in \{=,\neq, <, >, \ldots\}$. In particular, since a join, say  $R(\ldots, x) \wedge S(x,\ldots)$, can be written as $R(\ldots, x) \wedge S(x',\ldots) \wedge x=x'$, it can never be satisfied through \nn. This assumption is compatible with the use of {\footnotesize {\sf NULL}} in SQL databases (cf. \cite[sec. 4]{tplpP2P} for a detailed discussion, also \cite[sec. 2]{tkde}). \ Changes of
attribute
values by  \nit{null} as repair actions offer a natural and deterministic solution that appeals to {\em the} generic data value used in SQL databases to reflect the uncertainty and incompleteness in/of the database that inconsistency produces. \ In order to keep track of changes, we introduce numbers as first
arguments in tuples, as  global, unique tuple identifiers (tids).

\begin{example}  \label{ex:cause3} (ex. \ref{ex:cause2} cont.) \ With tids $D$  becomes  $D = \{S(1;a_2), S(2;a_3), R(3;a_3,a_1), \linebreak R(4;a_3,a_4),$ $ R(5;a_3,a_5)\}$; and $D_1$ becomes $D_1 = \{S(1;a_2), S(2;a_3), R(3;\nn,a_1),$ $R(4;\nn,a_4),$ $ R(5;\nn,a_5)\}$. \ The changes are collected in $\Delta^\nn(D,D_1) := \{R[3;1],$ $ R[4;1],R[5;1]\}$, showing that (the original) tuple (with tid) $3$ has its first-argument changed into $\nn$, etc. \ Similarly, $\Delta^\nn(D,D_2) := \{S[2;1]\}$, and
$\Delta^\nn(D,D_3) := \{S[2;1], R[3;1],$ $ R[4;1],R[5;1]\}$.

$D_1$ and $D_2$ are the only repairs based on attribute-value changes (into $\nn$) that are minimal under set inclusion of changes. More precisely, they are consistent, and there is not other consistent repaired version of this kind $D'$ for which $\Delta^\nn(D,D') \subsetneqq \Delta^\nn(D,D_1)$, and similarly for $D_2$. We denote this class of repairs (and the associated repair semantics) by $\nit{Srep}^{\nn}(D,\Sigma)$. \ Since $\Delta^\nn(D,D_1) \subsetneqq \Delta^\nn(D,D_3)$, $D_3 \notin \nit{Srep}^{\nn}(D,\{\kappa\})$. So, $\nit{Srep}^{\nn}(D,\{\kappa\}) = \{D_1,D_2\}$.

As with S-repairs, we can consider the subclass of repairs that minimize the number of changes, denoted $\nit{Crep}^{\nn}(D,\Sigma)$. In this example, $D_2$ is the only attribute-based cardinality repair: \  $\nit{Crep}^{\nn}($ $D,\{\kappa\}) = \{D_2\}$ \boxtheorem
\end{example}
Inspired by (\ref{eq:distG3}), we define: \vspace{-3mm}
\begin{eqnarray}
\mbox{\nit{inc-deg}}^{c,\nn,g_3\!}(D,\Sigma)  &:=&  \frac{\nit{min}\{|\Delta^\nn(D,D')|~:~ D' \in \nit{Crep}^{\nn}(D,\Sigma)\}}{|\nit{atv}(D)|},\nonumber
\end{eqnarray}
where $\nit{atv}(D)$ is the number of values in attributes of tuples in $D$.
\begin{example}  \label{ex:cause4} (ex. \ref{ex:cause3} cont.) \ Here, $\mbox{\nit{inc-deg}}^{c,\nn,g_3\!}(D,$ $\{\kappa\}) = \frac{1}{8}$, whereas $\mbox{\nit{inc-deg}}^{c,g_3\!}(D,$\linebreak $\{\kappa\}) = \frac{1}{5}$. Under attribute-based repairs semantics, it is easy to restore consistency: only one attribute value in the database has to be changed.
\boxtheorem
\end{example}
The computation of this measure can be done on the basis of ASPs for null-based attribute repairs that were introduced in \cite{foiks18}.
\vspace{-2mm}
\section{Extensions and Discussion}
\label{sec:open}\vspace{-2mm}

We have scratched the surface of some of the problems and research directions we considered in this work. Certainly all of them deserve further investigation, most prominently, the analysis of other inconsistency measures as those in Section \ref{sec:alter} and others, and the relationships between them.  Also a deeper analysis of the incremental case (cf. Section \ref{sec:updates}) comes to mind. \ It is also left for ongoing and future research establishing a connection to the problem of computing specific repairs, and using them \cite{benny}. The same applies to the use of the inconsistency measure to explore the {\em causes for inconsistency}, in particular, to analyze how it changes when tuples or combinations thereof are removed from the database. Such an application sounds natural given the established connection between database repairs, causality and causal responsibility \cite{tocs,foiks18}.

In relation to the abstract setting of Section  \ref{sec:incd}, we could consider a class
  $\nit{Rep}^{{\sf S}^\preceq\!}(D,\Sigma)$ of {\em prioritized repairs} \cite{stawo}, and through them introduce {\em prioritized measure of inconsisrtency}.  Repair programs for the kinds of priority relations $\preceq$ investigated in \cite{stawo} could be constructed from the ASPs introduced and investigated in \cite{gebser} for capturing different optimality criteria. The repair programs could be used to specify and compute the corresponding prioritized inconsistency measure.

It is natural to think of a  principled, postulate-based approach to inconsistency measures, similar in spirit to postulates for belief-updates \cite{katsuno-mendelzon}. This has been done in logic-based knowledge representation \cite{vanina}, but as we argued before, a dedicated, specific approach for databases becomes desirable.
\ In the following we go a bit deeper into some additional open directions of research.

\vspace{-2mm}
\subsection{Incremental computation of the inconsistency degree}\label{sec:incre}\vspace{-2mm}

In relation to the analysis of changes of the inconsistency degree under updates, a deeper analysis is open, including complexity in terms of the size of the  updates. This includes fixed-parameter tractability and approximation, much in the spirit of incremental consistent query answering \cite{icdt07}.

Also algorithms for incremental computation of the inconsistency measure are need-ed. In this direction, notice that our measure can be computed through the size of a minimum vertex-cover for the set of hyperedges of the conflict-hypergraph for $D$ w.r.t. $\Sigma$. There are deterministic incremental algorithms
for computing (actually, maintaining) a $(2+\epsilon)$-approximation to a minimum vertex-cover {\em in graphs} in time $O(\nit{log}^3(n))$ for an edge- deletion or an edge-insertion, in the worst-case \cite{bhata}. Here, $n$ is the fixed number of vertices. So, only edges can be inserted or deleted. This is not exactly our situation. However, this algorithm and its properties can be adapted to our case, where edges can be added or deleted only via tuples insertions or deletions on the basis of a fixed set of DCs, which we will assume for the moment have at most two database atoms (e.g. FDs), so we have a {\em conflict-graph}.

In our setting one can consider first a fixed, finite data domain, which gives rise to a finite number of potential tuples. We can assume the set of vertices (i.e. number of tuples) has a size $n = |D| + k \times |D|$, but the latter extra vertices do not participate in any DC violation, which can  be ensured through the use of nickname predicates that are not mentioned in the DCs. Accordingly, adding a tuple outside $D$ or deleting a tuple from $D$ amounts to disabling or activating its nickname predicate, which will have the effect of creating new edges (maybe more than one) or eliminating some old edges (always at most a polynomial number of them  in $n$). After that, the above mentioned approximate algorithm for maintaining a minimum vertex-cover can be applied, as many times as edges are inserted or deleted. The size of the maintained vertex-cover  can be used to approximate the inconsistency measure with logarithmic-time for each of the updated edges.

In the case of DCs, we have hyperedges, but of bounded size, say $d$. It is likely that the approximation algorithm in \cite{bhata} can be extended to this case, but  with a $(d+\epsilon)$-approximation (as is common in the transition from graphs to hypergraphs with bounded hyperedges, e.g. see Section \ref{sec:comple}).

\vspace{-2mm}
\subsection{Sampling and sizes}\vspace{-2mm} The inconsistency measure can be seen as a form of complex aggregation in a database. As such, it becomes natural to try to approximate its value, specially in a huge database. Deterministic and randomized approximations as discussed in Section \ref{sec:comple} can be used, but adopting a statistical point of view, sampling the database to approximate the inconsistency measure looks quite appealing. The natural problem that immediately comes to mind is about the characterization and computation of the ``best" {\em statistics} defined on a sample of the database that can be used to provide a ``good" estimate of the inconsistency  measure. Also developing sampling techniques becomes crucial.

Whenever we consider sampling and estimates, {\em sizes} become relevant. In our case, relevant sizes are, apart from that of the database, the number of hyperedges in the conflict-hypergraph,  and the degrees in it of the database tuples (cf. the discussion right after Proposition \ref{prop:random}).  Both sizes are polynomial in the size of the database and the extensions of the associated sets can be defined as views over the CQs  associated to the DCs. More precisely, we can: (a) introduce tuple-identifiers (tids) for the tuples in $D$, (b) assign an order, $\prec$, to the list of predicates in the  schema; and (c)
for each DC $\kappa\!:  \neg \exists \bar{x} \Phi(\bar{x})$, with $\Phi(\bar{x})$ being the associated CQ or join, introduce a new predicate $\nit{HE}_\kappa$ for the hyperedges associated to $\kappa$. For example, if $\kappa$ is $\neg \exists \bar{x}_1\bar{x}_2\bar{x}_3 (P(\bar{x}_1) \wedge R(\bar{x}_2) \wedge S(\bar{x}_3))$, with $P \prec R \prec S$, the extension of $\nit{HE}_\kappa$ is defined (in Dalatog) by: \
$\nit{HE}_\kappa(t_1,t_2,t_3) \leftarrow P(t_1;\bar{x}_1), R(t_2;\bar{x}_2), S(t_3;\bar{x}_3)$. \ Next, on the basis of the $\nit{HE}_\kappa$ one can define a predicate collecting the neighbors of tuples, which can be used to compute or estimate the maximum degree of a tuple (the $\Delta$ mentioned after Proposition \ref{prop:random}). \
It would be interesting to investigate to what extent optimal output size bounds for the set of answers to these ``denial CQs", i.e. to the CQs $\Phi(\bar{x})$ \cite{hung},  can be taken advantage of to provide optimal estimates for the sizes of the hyperedges and tuple degrees.

\vspace{-2mm}
\subsection{Alternative inconsistency measures}\label{sec:alter}\vspace{-2mm}

 Exploring other possible inconsistency measures in our relational setting is quite an  open research direction. \ Several (in)consistency measures have been considered in knowledge representation \cite{hunter,thimm,vanina}, mostly for the propositional case or are applied with grounded first-order representations. \ It would be interesting to analyze the general properties of those measures that are closer to database applications, along the lines of
\cite{eiterMannila}; and their relationships. For each measure it becomes relevant to investigate the complexity of its computation, in particular, in data complexity (even for simple key constraints, databases may have exponentially many repairs in size of the database \cite{Bertossi2011}).

 A first observation is that, as argued in \cite{icdt07},  techniques and results for C-repairs can be extended to deal with databases whose tuples have weights, and in order to repair the aggregated
weight of removed tuples has to be a minimum.\footnote{Weighted repairs have been considered in \cite{icdt07,du,kolaitis17}.} Accordingly, $\mbox{\nit{inc-deg}}^{c,g_3\!}(D,\Sigma)$ and its results can be extended to ``weighted-repairs". \ Furthermore,
this measure, although based on tuple-deletions in the presence of DCs, can be applied with other classes of ICs, such as {\em inclusion dependencies}, and more generally,
{\em tuple-generating dependencies} (TGDs) \cite{AHV95},  if we still repair the database by tuple-deletions \cite{chomicki}. In this case, the results in Section \ref{sec:comple}  apply to TGDs since  their antecedents are treated as DCs.

We assume in the rest of this section that $\Sigma$ is a set of DCs, and the repair actions are tuple-deletions. Here below we briefly introduce a couple of alternative inconsistency measures that could be further investigated along similar lines as in the previous sections.

\vspace{2mm}
\begin{equation}
\hspace*{-50mm} \mbox{{\bf (A).}} \hspace{2cm} \mbox{\nit{inc-deg}}^{s,\#}(D,\Sigma) = \frac{|\nit{Srep}(D)|}{2^{|D|}}. \label{eq:inc}
\end{equation}
 Under DCs, there is always at least one S-repair (and exactly one if $D$ is already consistent or the single DC only prohibits a particular tuple); then the minimum value this measure can take is
 $\frac{1}{2^{|D|}}$. Since proper subsets of S-repairs are not S-repairs, this measure never takes the value $1$ (nor the value
$0$, as we just argued). \ Measure $\mbox{\nit{inc-deg}}^{c,\#}(D,$ $\Sigma)$, defined as in (\ref{eq:inc}) with C-repairs replacing S-repairs, does not coincide with \linebreak  $\mbox{\nit{inc-deg}}^{s,\#}(D,\Sigma)$ (in contrast with the measure in Section \ref{sec:ours}).

The denominator in (\ref{eq:inc}) may be too large. So, to obtain $0$ when the database is consistent, the measure could be modified as
\begin{equation}
\mbox{\nit{inc-deg}}^{\nit{all},\#}(D,\Sigma) := 1- \frac{|\{D'~|~D'\subseteq D \mbox{ and } D' \models \Sigma\}|}{2^{|D|}}. \label{eq:incMod}
\end{equation}
If $D$ is consistent, every subset also is, and the measure takes value $0$.

  The complexity of counting S-repairs wrt. FDs that satisfy a given Boolean conjunctive query (BCQ) was investigated in \cite{wijsenCount}. \ Depending on the syntactic form of the query,
  this can be done in polynomial time or is $\sharp P$-complete (a dichotomy);  all this in data complexity. It is easy to obtain from these results that the problem of counting the number of S-repairs wrt. key constraints can be solved in polynomial time in data complexity: simply add an atom $A$ to the database that does not participate
  in any violation and ask how many S-repairs make the (very simple) BCQ about $A$ true.

The measure in (\ref{eq:inc}) could be generalized to $\mbox{\nit{inc-deg}}^{{\sf S},\#}(D,\Sigma)$, with a generic repair semantics $\sf{S}$, by replacing $\nit{Srep}(D)$ by
$\srep$. Under some repair semantics, an inconsistent database might have no repairs, e.g. if it accepts only endogenous repairs, as in Example \ref{ex:endo}.
In this case $\mbox{\nit{inc-deg}}^{{\sf S},\#}(D,\Sigma)$ returns $0$.
So, in this case the absence of repairs is interpreted, in some sense, as perfect consistency (in contrast to the result in Example \ref{ex:endo}).

\vspace{2mm}
\begin{equation}
\hspace*{-40mm}\mbox{{\bf (B).}} \hspace{2cm}\mbox{\nit{inc-deg}}^{s,J\!}(D,\Sigma) := 1- \frac{|\bigcap \nit{Srep}(D)|}{|D|},
\end{equation}
which is inspired by the {\em Jaccard} distance \cite{ullman}. It takes the value $0$ when $D$ is consistent, and $1$ when $\bigcap \nit{Srep}(D) = \emptyset$, i.e. when the intersection of the repairs is empty, showing that every tuple is involved in an IC violation, and nothing forces us to keep it in every repair.\footnote{An IC that forces a particular tuple to be in the database is not (logically equivalent to) a DC.}

As with (A), this measure can be generalized to $\mbox{\nit{inc-deg}}^{{\sf S},J\!}(D,\Sigma)$, with a generic repair semantics ${\sf S}$. In this case,
an inconsistent database might have no repairs (as discussed for (A) above); and, trivially, $\bigcap \srep = \bigcap \emptyset = D$; and then, \linebreak  $\mbox{\nit{inc-deg}}^{{\sf S},J\!}(D,\Sigma) = 0$. So as with (A), under this inconsistency measure the absence of repairs is interpreted as perfect consistency.

\vspace{-2mm}
\subsection{Beyond relational DBs: \ ontology-based data access}\vspace{-2mm}

Ontology-based data access (OBDA) is about accessing data from underlying sources through an ontology, most typically via queries expressed in the language of the ontology, which has access to the data through mappings \cite{xiao}. The combination of extensional database (EDB) and the ontology may become inconsistent and has to be repaired. The main approaches so far are based on (possibly virtual) changes on the EDB, mostly tuple deletions \cite{meghyn,lembo,vanina}, and consistently querying the resulting possible worlds (ontologies). Approaches to ``ontological inconsistency-tolerance" that privilege deletions of extensional tuples, and implicitly shift the culprit for inconsistency to the EDB make it reasonable to apply our inconsistency measures to the combination of extensional data and ontologies.

\vspace{-2mm}
\subsection{ASP, DBs and In-DB}\label{asp+}\vspace{-2mm}
Answer-set programming (ASP) can be seen as an extension of Datalog that supports disjunction, non-stratified negation, and constraints. Furthermore, if the semantics of ASP is applied to a Datalog program one reobtains the intended Datalog semantics. ASP has become the {\em de facto} standard language for representing and performing non-monotonic reasoning in knowledge representation.

Applying ASP to data management problems, with the database providing  the extensional data for the program,  is not only natural, but unavoidable if one wants to represent those data problems in general declarative terms, wants an exact solution, and the complexity of those problems is higher than polynomial (in data complexity) \cite{dlvDB,dlv,monica}. Actually, ASP captures problems at the second-level of the polynomial hierarchy \cite{dantsin}, and can be successfully used to specify and solve in declarative terms complex combinatorial problems. (For example, instead of following the repair-program route in Section \ref{sec:asp}, we could directly specify the hitting-sets or vertex-covers for the hyperedges in the conflict-hypergraph.)

ASP-based reasoning systems have been highly optimized \cite{brewka}, but for complexity-theoretic reasons they cannot be run inside a relational database. However, it would be really interesting to investigate, for database applications with large volumes of data, under what conditions and to what extent parts of the computation associated to the execution of an ASP can be pushed inside the database, where highly optimized join algorithms have been recently discovered and implemented \cite{hung}. In this direction there is exciting recent work on the implementation of machine learning and optimization algorithms inside the database, the {\em in-database} approach \cite{hung2}.

\vspace{-2mm}
\subsection{Tuple-level inconsistency degrees}\vspace{-2mm}

The inconsistency measure is global in that it applies to the whole database. However, one could also investigate and measure the contribution by individual tuples to the degree of inconsistency of the database. Such local measures have been investigated before in a logical setting \cite{hunterShapley}. It turns out that in our case the global inconsistency measure can be expressed in terms of the {\em responsibility} of tuples as {\em causes} for the violation of the DCs in $\Sigma$.

The connections between database causality  \cite{suciu} and database repairs were investigated in \cite{tocs}, where it is established that the {\em responsibility} of a tuple $\tau$ as a cause for $D \not \models \Sigma$ is given by:
\begin{equation}
\rho_{_{D,\Sigma}}(\tau) = \frac{1}{|D| - \mbox{max}(|S|)},
\end{equation}
where $S\subseteq D$ is an S-repair of $D$ wrt. $\Sigma$ and $\tau \notin S$ \ (but $\rho_{_{D,\Sigma}}(\tau):= 0$ if there is not such an $S$). \ Combining this with (\ref{eq:s})
 and (\ref{eq:c}), we can see that
\begin{equation}
 \mbox{\nit{inc-deg}}^{c,g_3\!}(D,\Sigma) = \frac{1}{\rho_{_{D,\Sigma}}(\tau) \times |D|}, \label{eq:final}
 \end{equation}
where $\tau$ is one and any of the {\em maximum-responsibility} tuples $\tau$ as causes for $D \not \models \Sigma$. \ We can also consider the responsibility of  tuple, $\rho_{_{D,\Sigma}}(\tau)$, as its degree of contribution to the inconsistency of the database, and those with the highest responsibility as those with a largest degree of  contribution.  According to (\ref{eq:final}), the global inconsistency measure turns out to be an aggregation over  local, tuple-level, degrees of inconsistency.

\vspace{2mm}
\noindent {\bf Acknowledgments:} \ The author has been supported by NSERC Discovery Grant \#06148. He is grateful to Jordan Li for his help with example on DLV; and to Benny Kimelfeld, Sudeepa Roy and Ester Livshits for stimulating general conversations of inconsistency measures. Excellent comments received from anonymous reviewers for a previous version of this paper are much appreciated.

\bibliographystyle{plain}


{\small

}

 \section*{Appendix A. \ An Extended  Example with DLV-Complex}\label{sec:dlv}

 In this section we retake our running example (cf. Examples \ref{ex:rep}, \ref{ex:rep2} and  \ref{ex:del}), showing how to compute repairs and inconsistency degrees by means of
 DLV-Complex \cite{dlv,calimeri08}.

The atoms in the database, with global tuple-ids, are:
{\footnotesize \begin{verbatim}
    p(1,a).    p(2,e).    q(3,a,b).   r(4,a,c).
\end{verbatim} }
The repair rules in Example \ref{ex:rep2} in their non-disjunctive versions are:
{\footnotesize \begin{verbatim}
    p_a(T,X,d)   :- p(T,X), q(T2,X,Y), not q_a(T2,X,Y,d).
    q_a(T,X,Y,d) :- q(T,X,Y), p(T2,X), not p_a(T2,X,d).

    p_a(T,X,d)   :- p(T,X), r(T2,X,Y), not r_a(T2,X,Y,d).
    r_a(T,X,Y,d) :- r(T,X,Y), p(T2,X), not p_a(T2,X,d).
\end{verbatim} }
The rules used to collect atoms in the repairs, as in Example \ref{ex:rep2}, are:
{\footnotesize \begin{verbatim}
    p_a(T,X,s)   :- p(T,X), not p_a(T,X,d).
    q_a(T,X,Y,s) :- q(T,X,Y), not q_a(T,X,Y,d).
    r_a(T,X,Y,s) :- r(T,X,Y), not r_a(T,X,Y,d).
\end{verbatim} }
The following rules retrieve the tids of deleted tuples:
{\footnotesize \begin{verbatim}
    del(T) :- p_a(T,X,d).
    del(T) :- q_a(T,X,Y,d).
    del(T) :- r_a(T,X,Y,d).
\end{verbatim} }
\noindent The following rules compute, in this order and per repair:  the number of deleted tuples (per repair), the cardinalities of  the original tables, the number of tuples in the database, the cardinality of each repaired table, the cardinality of the repair, and, finally,  the number of tuples in the difference between the original instance and the repair.
{\footnotesize \begin{verbatim}
    #maxint = 100.
    numDel(N) :- #int(N), #count{T: del(T)} = N.
    cardPred(p,N) :- #int(N), #count{T : p(T,X)} = N.
    cardPred(q,N) :- #int(N), #count{T : q(T,X,Y)} = N.
    cardPred(r,N) :- #int(N), #count{T : r(T,X,Y)} = N.
    cardDB(N) :- #sum{X,P : cardPred(P,X)} = N.
    cardRep(p,N) :- #int(N), #count{T : p_a(T,X,s)} = N.
    cardRep(q,N) :- #int(N), #count{T : q_a(T,X,Y,s)} = N.
    cardRep(r,N) :- #int(N), #count{T : r_a(T,X,Y,s)} = N.
    cardRepDB(N) :- #int(N), #sum{X,P : cardRep(P,X)} = N.
    dist(N) :- #int(N), cardDB(A), cardRepDB(B), N = A - B.
\end{verbatim} }
\noindent Running the program we obtain two stable models, corresponding to the two S-repairs in Example \ref{ex:rep}; each of them showing the (unnormalized) distance to the original instance, namely 2 and 1, resp.:

\newpage {\footnotesize \begin{verbatim}
    DLV [build BEN+ODBC/Dec 17 2012   gcc 4.6.1]

    {p(1,a), p(2,e), q(3,a,b), r(4,a,c), cardPred(p,2),
    cardPred(q,1), cardPred(r,1), cardDB(4), q_a(3,a,b,d),
    r_a(4,a,c,d), p_a(1,a,s), p_a(2,e,s), del(3), del(4),
    cardRep(p,2), cardRep(q,0), cardRep(r,0), cardRepDB(2),
    numDel(2), dist(2)}

    {p(1,a), p(2,e), q(3,a,b), r(4,a,c), cardPred(p,2),
    cardPred(q,1), cardPred(r,1), cardDB(4), p_a(1,a,d),
    q_a(3,a,b,s), r_a(4,a,c,s), p_a(2,e,s), del(1),
    cardRep(p,1), cardRep(q,1), cardRep(r,1), cardRepDB(3),
    numDel(1),    dist(1)}
\end{verbatim} }
\noindent The second model (repair) is the only C-repair, which is the one giving the minimum distance, $1$.
\ If we are interested only in the possible distances with origin in the different repairs, we can add a query about them (It can be included at the end of the program file). The answers under the {\em possible} or {\em brave semantics}
will be those obtained from some repair:\footnote{Having the query in the program file (say `progFile"), after the program, this is done by running from the
DLV command line: ``dlv -brave progFile". \ For the cautions (or certain) answers, i.e. those true in {\em all} repairs, we would use ``dlv -cautions progFile".}
{\footnotesize \begin{verbatim}
    dist(X)?
    1
    2
\end{verbatim} }
\noindent From this we obtain $1$ as the minimum distance. This off-line comparison of distances, either through the query results or inspection of the models (as above),  can be avoided by adding
to the program above a weak constraint (WC) aiming at minimizing the number of deleted tuples:
{\footnotesize \begin{verbatim}
    :~ del(T).
\end{verbatim} }
\noindent The output shows only the C-repair including the unnormalized distance to the original instance, namely $1$, and the cost as the number of violations of the only WC:
{\footnotesize \begin{verbatim}
    Best model: {p(1,a), p(2,e), q(3,a,b), r(4,a,c), cardPred(p,2),
    cardPred(q,1), cardPred(r,1), cardDB(4), p_a(1,a,d),
    q_a(3,a,b,s), r_a(4,a,c,s), p_a(2,e,s), del(1), cardRep(p,1),
    cardRep(q,1), cardRep(r,1), cardRepDB(3), numDel(1), dist(1)}

    Cost ([Weight:Level]): <[1:1]>
\end{verbatim} }

\newpage
\section*{Appendix B. \ Proofs of Results}\label{app:proofs}

\defproof{Theorem \ref{thm:fp}}{Computing $\mbox{\nit{inc-deg}}^{c,g_3\!}(D,\Sigma)$ is basically about computing \linebreak $\nit{max}\{ |D'| : D' \in \crep  \}$. Since all C-repairs have the same size, we need to compute the size of a C-repair wrt. DCs.
 This problem is $\nit{FP}^{\nit{NP(log(n))}}$-complete in data complexity\cite[theo. 3]{icdt07}.}\\

  \defproof{Lemma \ref{lemma:max}}{Consider the relational predicate $C(\nit{clause},\nit{variable},\nit{sign})$, with the FDs: $\nit{clause}$ $\rightarrow$ $\nit{variable}$, and $\nit{variable}$ $\rightarrow$ $\nit{sign}$.

  Consider now an instance for the 3-SAT problem, as a propositional formula $\psi$ in CNF over the propositional variables $p_1, p_2, \ldots$.  Assume that $\psi$ is of the form $c_1 \wedge \cdots \wedge c_m$, with
  each $c_i$ a disjunction of three literals, i.e. propositional variables or negations thereof. We may assume that each $c_i$ does not contain a variable and its negation.

 From $\psi$ we construct an instance $D$ for this schema, as follows. For each clause $c_i$ and propositional variable $p_j$ in it, create the tuple $C(c_i,p_j, \pm)$, with $-$ if $p_j$ appears negated and $+$, otherwise.

 Instance $D$ is inconsistent wrt. $\Sigma$ (except in the extreme and trivial case where each clause contains a single and distinct literal), and $\nit{CG}(D,\Sigma)$, that has the tuples as vertices, contains an edge between $C(c_i,p_j,s)$ and  $C(c_k,p_l,s')$ iff (a) $p_j = p_l$ and $s\neq s'$, or (b) $c_i = c_k$ and $p_j \neq p_l$.

  Consider now the complement of the conflict graph, $\nit{CG}^c(D,\Sigma)$. The tuples are the same, but there is an edge between $C(c_i,p_j,s)$ and  $C(c_k,p_l,s')$ iff $c_i \neq c_k$ and $p_j\neq p_l$, or $c_i \neq c_k$ and
  $s = s'$. Since in this graph there are never two nodes of the form $C(c,p,-1)$ and $C(c,p,+1)$, it is isomorphic to the graph $\mc{G}$ with nodes $C(c,p)$, for some $C(c,p,s) \in \nit{CG}^c(D,\Sigma)$, and  with the edges inherited from  $\nit{CG}^c(D,\Sigma)$. This graph $\mc{G}$ is the one that one builds to reduce $\psi$ to a graph \cite[theo. 10.5]{ullmanAlg}, in such a way that $\psi$ has $k$ clauses satisfied iff $\mc{G}$ has a clique of size $k$.\footnote{For a reduction from SAT to the Independent Set problem, see \cite[theo. 9.4]{papa}.} Now,
  $\mc{G}$ has a clique of size $k$ iff its complement $\nit{CG}(D,\Sigma)$ has an independent set of size $k$.\ignore{; that is, to establish that the $k$-clique problem is NP-hard.} Since $k$-satisfiability of 3-CNF formulas is \nit{NP}-complete \cite[theo. 9.2]{papa}, we obtain the result.}

  \begin{example} Consider the formula $\psi\!: \ c_1 \wedge c_2 \wedge c_3$, with $c_1\!: \ (p_1 \vee \neg p_2), \ c_2\!: \ (p_2 \vee \neg p_3), \ c_3\!: \ (p_3 \vee \neg p_1)$. The conflict graph $\nit{CG}(D,\Sigma)$ is shown on the left-hand side  below, and its complement graph, $\nit{CG}^c(D,\Sigma)$, on the right-hand side.

\begin{multicols}{2}
  \begin{center}
  \includegraphics[width=6cm]{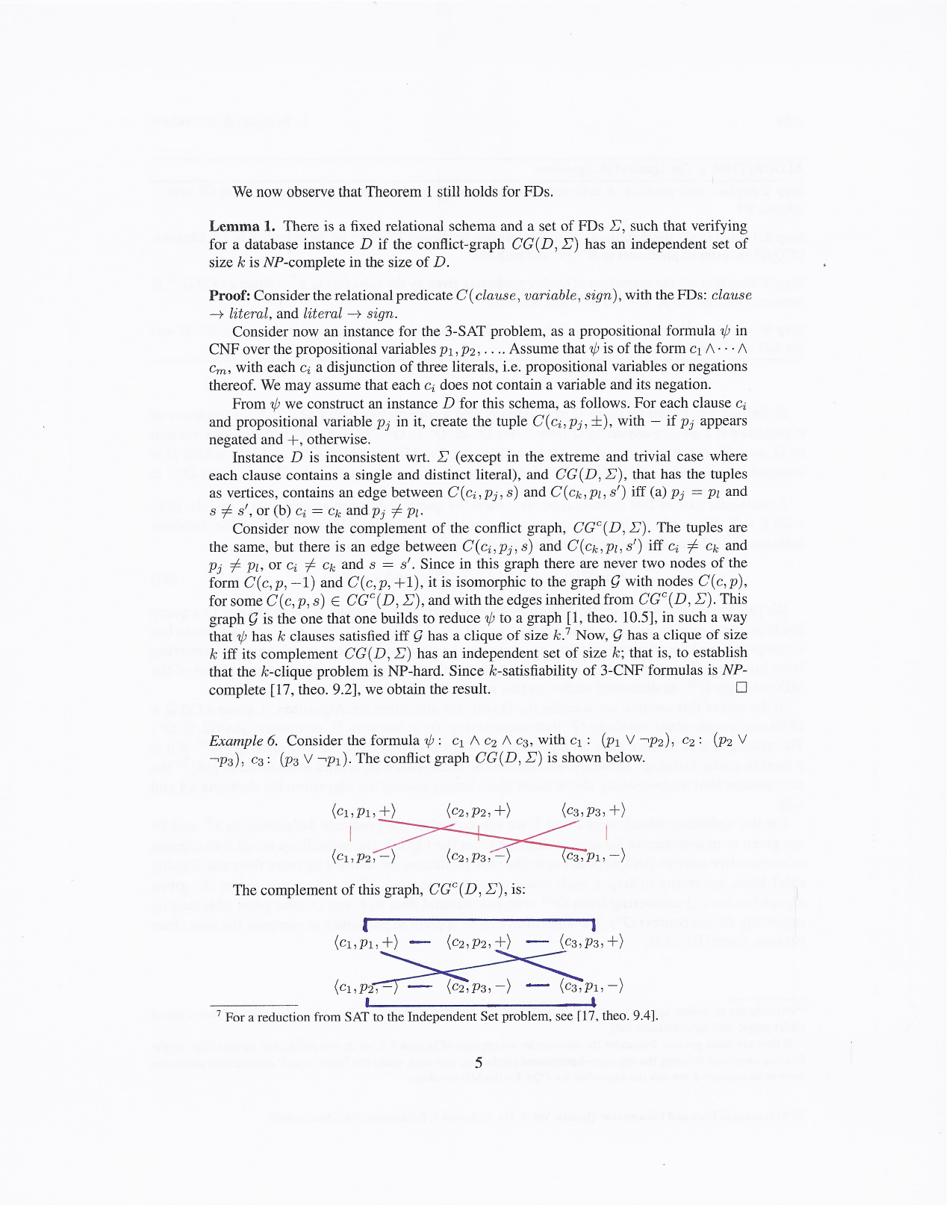}
  \end{center}

 \ignore{
  \begin{eqnarray*}
  \langle c_1, p_1, +\rangle\hspace{1cm}&\langle c_2, p_2, +\rangle&\hspace{1cm}\langle c_3, p_3, +\rangle\\
  \red{|} \hspace{2cm}& \red{|} &\hspace{2cm} \red{|}\\
  \langle c_1, p_2, -\rangle\hspace{1cm}&\langle c_2, p_3, -\rangle&\hspace{1cm}\langle c_3, p_1, -\rangle
  \end{eqnarray*} }


  \begin{center}
  \includegraphics[width=6cm]{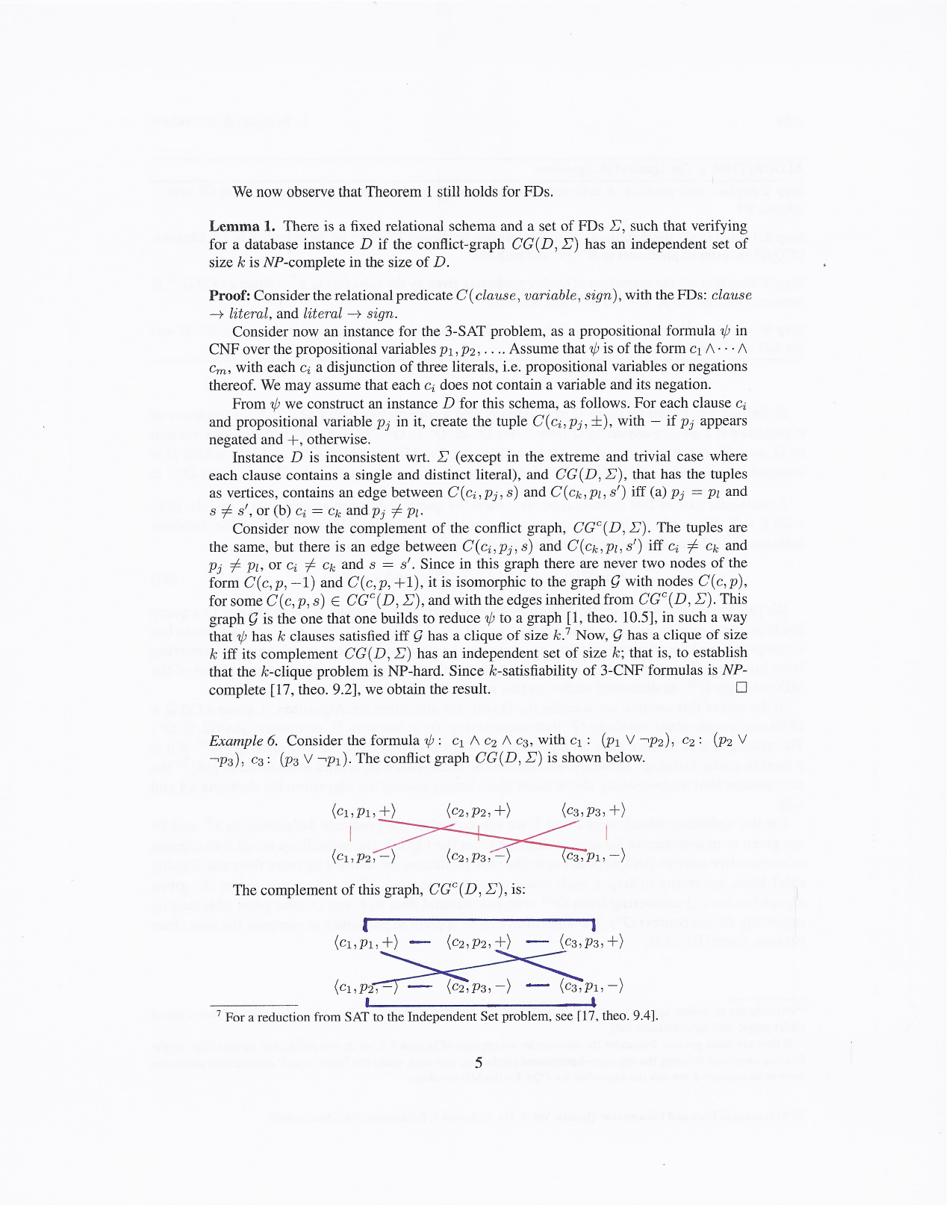}
  \end{center}
\end{multicols}

 \ignore{ \begin{eqnarray*}
  \langle c_1, p_1, +\rangle\hspace{0.2cm}\blue{--}\hspace{0.2cm}&\langle c_2, p_2, +\rangle&\hspace{0.2cm}\blue{--}\hspace{0.2cm}\langle c_3, p_3, +\rangle\\
  \\
  \langle c_1, p_2, -\rangle\hspace{0.2cm}\blue{--}\hspace{0.2cm}&\langle c_2, p_3, -\rangle&\hspace{0.2cm}\blue{--}\hspace{0.2cm}\langle c_3, p_1, -\rangle
  \end{eqnarray*}  }

 The maximum size of an independent set in $\nit{CG}(D,\Sigma)$  is the same as the size of maximal clique in $\nit{CG}^c(D,\Sigma)$, which is $3$, and is also the maximum number of simultaneously satisfiable clauses $c_i$ in $\psi$ (and then the formula is satisfiable). \boxtheorem
  \end{example}

\defproof{Corollary \ref{cor:fds}}{For the schema  and instance $D$ as in the lemma, there is a C-repair of size at least $k$ iff in the conflict graph there is an independent set of size at least $k$.}\\

\defproof{Theorem \ref{thm:fixed}}{Membership follows from Corollary \ref{cor:fds} in combination with binary search for computing the size of C-repair, which can be used to compute the measure.
\ Completeness follows from the reduction from maximum-number of clause-satisfaction  for SAT  to maximum-size of a clique in the complement of  $\nit{CG}(D,\Sigma)$. The former problem  is $\nit{FP}^{\nit{NP(log(n))}}$-complete  \cite[theo. 2.2]{krentel}.}\\

 \defproof{Theorem \ref{thm:det}}{We appeal again to the conflict-hypergraph in Remark \ref{rem:cg}. The result is obtained from a polynomial-time  approximation -via integer programming relaxation into  linear programming- to  the
(size of a) minimum-vertex cover problem in a hypergraph whose hyperedges are bounded above in size by a number $d$. There is a $d$-ratio approximation algorithm (\cite[chap. 3]{hochbaum} and \cite{yehuda}).}\\

\defproof{Proposition \ref{prop:cont1}}{Let us assume $k$ out of the $m = \epsilon \times |D|$ new tuples participate in new violations, in combination with new or old tuples, i.e. they appear in subset-minimal hyperedges for $D'$. If we delete these $k$ tuples, every C-repair for $D$ is also a C-repair for $D$ plus the $m-k$ non-violating new tuples. Accordingly, C-repairs for $D'$ are obtained by deleting at most $k$ tuples plus  those deleted to obtain a C-repair for $D$. Then,

\vspace{2mm}$\frac{\nit{min} \{|D' \smallsetminus D''|~:~ D'' \in \nit{Crep}(D',\Sigma)\}}{|D| + m} \leq \frac{\nit{min} \{|D \smallsetminus D''|~:~ D'' \in \nit{Crep}(D,\Sigma)  \} + k}{|D| + m} \leq$\\ \hspace*{8.3cm}$\mbox{\nit{inc-deg}}^{c,g_3}(D,\Sigma) + \frac{m}{|D| + m}$.

For the second part, let $D^{\star\star}$ be a C-repair for $D$, and $D^\star$ a C-repair for $D' = D \cup D_k$. \ Now, $D^\star \smallsetminus D_k$ is a consistent sub-instance of $D$. Since, $D^{\star\star}$ is a C-repair of $D$:
\begin{eqnarray*}
\mbox{\nit{inc-deg}}^{c,g_3}(D,\Sigma) &=&
\frac{|D\smallsetminus D^{\star \star}|}{|D|} \leq \frac{|D\smallsetminus (D^\star \smallsetminus D_k)|}{|D|} = \frac{|D\smallsetminus D^\star|}{|D|} =\\ && \frac{|(D' \smallsetminus D_k) \smallsetminus D^\star|}{|D|}
= \frac{|(D' \smallsetminus (D_k \cup  D^\star)|}{|D|} \leq \frac{|D' \smallsetminus D^\star|}{|D|} =\\&& \frac{|D' \smallsetminus D^\star|}{(1-\epsilon) \times |D'|} = \frac{1}{1-\epsilon} \times \mbox{\nit{inc-deg}}^{c,g_3}(D',\Sigma).
\end{eqnarray*}
}\\

\defproof{Proposition \ref{prop:cont2}}{ Let $D' = D \smallsetminus D_k$, with $|D_k| = k = \epsilon \times |D|$. So, $|D'| = (1-\epsilon) \times |D|$. \ Let $D_1$ be a C-repair for $D'$, then $\mbox{\nit{inc-deg}}^{c,g_3}(D',\Sigma) = \frac{|D' \smallsetminus D_1|}{|D'|}$. Since $D_1$ is consistent and contained in $D$, it is also a repair for $D$, but possibly non-maximum in size. Then, with $D^*$ a C-repair for $D$, $\mbox{\nit{inc-deg}}^{c,g_3}(D',\Sigma) \leq \frac{|D' \smallsetminus D^*|}{|D'|} = \frac{|D \smallsetminus D^*|}{(1-\epsilon) \times |D|} = \frac{1}{1-\epsilon} \times \mbox{\nit{inc-deg}}^{c,g_3}(D,\Sigma)$.

For the second part, let $D^{\star\star}$ be a C-repair for $D$, $D^\star$ a C-repair for $D'$, and $D_k = D_{k'} \cup D_{k-k'}$, $0 \leq k' \leq k$, be a partition of $D_k$ into the tuples that participate in DC violations in $D$, and those that do not. Then, $D^\star \cup D_{k-k'}$ is an S-repair for $D$. Then,
\begin{eqnarray*}
\mbox{\nit{inc-deg}}^{c,g_3}(D,\Sigma) &=& \frac{|D \smallsetminus D^{\star\star}|}{|D|} \leq \frac{|D \smallsetminus (D^\star \cup D_{k-k'})|}{|D|} = \frac{(D' \smallsetminus D^\star) \cup D_{k'}|}{|D|} \\
&=& \frac{|D' \smallsetminus D^\star| + |D_{k'}|}{|D|} \leq \frac{|D' \smallsetminus D^\star|}{(1-\epsilon) \times |D'|} + \frac{|D_{k'}|}{|D|}\\
 &\leq& \frac{1}{(1-\epsilon)} \times \mbox{\nit{inc-deg}}^{c,g_3}(D',\Sigma) + \epsilon.
\end{eqnarray*}
When $D_{k'} = \emptyset$, the last term disappears.}\\

\defproof{Theorem \ref{theo:fpt}}{The conflict-hypergraph $\nit{CG}(D,\Sigma)$ in Remark \ref{rem:cg} has its hyperedges bounded above in size by $d$. The C-repairs are in one-to-one correspondence with the minimum-vertex covers: the deletion of such a vertex cover produces a C-repair, because this eliminates one tuple from each conflict and so restores consistency in a minimum way. We are interested in determining the size of a minimum vertex cover. \ Then, this is a case of the so-called \emph{d-hitting set problem}, consisting
in finding the size of a minimum hitting set for an hypergraph
with hyperedges bounded in size by $d$.

It is known that the problem of determining if a graph of size $n$ has a vertex cover of size not larger than $k$ is
$\nit{FPT}$ with parameter $k$ \cite{chen,niedermeier}, that is, there is a decision algorithm that runs $O(C^k + kn)$. This is exponential in parameter $k$, but linear in $n$. In our case, we have an initial graph of size $N$, without edges, plus $m$ additional nodes that can have edges between them or with pre-existing nodes. By binary search on $m$, we can determine the size of a minimum vertex cover for the graph with $N + m$ nodes in time
bounded above by $O(\nit{log}(m) \times (C^m + m N))$. This value can be used to easily compute the inconsistency measure. This argument also applies to hypergraphs with $d$-bounded edges, in which case the constant $C$ depends on $d$ \cite{niedermeier}. }

\ignore{
\section*{Appendix C. \ Extensions and Discussion}
\label{sec:open}

We have scratched the surface of some of the problems and research directions we considered in this work. Certainly all of them deserve further investigation, most prominently, the analysis of other inconsistency measures as those in Section \ref{sec:alter} and others, and the relationships between them.  Also a deeper analysis of the incremental case (cf. Section \ref{sec:updates}) comes to mind. \ It is also left for ongoing and future research establishing a connection to the problem of computing specific repairs, and using them \cite{benny}. The same applies to the use of the inconsistency measure to explore the {\em causes for inconsistency}, in particular, to analyze how it changes when tuples or combinations thereof are removed from the database. Such an application sounds natural given the established connection between database repairs, causality and causal responsibility \cite{tocs,foiks18}.

It is natural to think of a  principled, postulate-based approach to inconsistency measures, similar in spirit to postulates for belief-updates \cite{katsuno-mendelzon}. This has been done in logic-based knowledge representation \cite{vanina}, but as we argued before, a dedicated, specific approach for databases becomes desirable.
\ In the following we go a bit deeper into some additional open directions of research.

\subsection*{Incremental computation of the inconsistency degree}\label{sec:incre}

In relation to the analysis of changes of the inconsistency degree under updates, a deeper analysis is open, including complexity in terms of the size of the  updates. This includes fixed-parameter tractability and approximation, much in the spirit of incremental consistent query answering \cite{icdt07}.

Also algorithms for incremental computation of the inconsistency measure are needed. In this direction, notice that our measure can be computed through the size of a minimum vertex-cover for the set of hyperedges of the conflict-hypergraph for $D$ w.r.t. $\Sigma$. There are deterministic incremental algorithms
for computing (actually, maintaining) a $(2+\epsilon)$-approximation to a minimum vertex-cover {\em in graphs} in time $O(\nit{log}^3(n))$ for an edge- deletion or an edge-insertion, in the worst-case \cite{bhata}. Here, $n$ is the fixed number of vertices. So, only edges can be inserted or deleted. This is not exactly our situation. However, this algorithm and its properties can be adapted to our case, where edges can be added or deleted only via tuples insertions or deletions on the basis of a fixed set of DCs, which we will assume for the moment have at most two database atoms (e.g. FDs), so we have a {\em conflict-graph}.

In our setting one can consider first a fixed, finite data domain, which gives rise to a finite number of potential tuples. We can assume the set of vertices (i.e. number of tuples) has a size $n = |D| + k \times |D|$, but the latter extra vertices do not participate in any DC violation, which can  be ensured through the use of nickname predicates that are not mentioned in the DCs. Accordingly, adding a tuple outside $D$ or deleting a tuple from $D$ amounts to disabling or activating its nickname predicate, which will have the effect of creating new edges (maybe more than one) or eliminating some old edges (always at most a polynomial number of them  in $n$). After that, the above mentioned approximate algorithm for maintaining a minimum vertex-cover can be applied, as many times as edges are inserted or deleted. The size of the maintained vertex-cover  can be used to approximate the inconsistency measure with logarithmic-time for each of the updated edges.

In the case of DCs, we have hyperedges, but of bounded size, say $d$. It is likely that the approximation algorithm in \cite{bhata} can be extended to this case, but  with a $(d+\epsilon)$-approximation (as is common in the transition from graphs to hypergraphs with bounded hyperedges, e.g. see Section \ref{sec:comple}).

\subsection*{Sampling and sizes} The inconsistency measure can be seen as a form of complex aggregation in a database. As such, it becomes natural to try to approximate its value, specially in a huge database. Deterministic and randomized approximations as discussed in Section \ref{sec:comple} can be used, but adopting a statistical point of view, sampling the database to approximate the inconsistency measure looks quite appealing. The natural problem that immediately comes to mind is about the characterization and computation of the ``best" {\em statistics} defined on a sample of the database that can be used to provide a ``good" estimate of the inconsistency  measure. Also developing sampling techniques becomes crucial.

Whenever we consider sampling and estimates, {\em sizes} become relevant. In our case, relevant sizes are, apart from that of the database, the number of hyperedges in the conflict-hypergraph,  and the degrees in it of the database tuples (cf. the discussion right after Proposition \ref{prop:random}).  Both sizes are polynomial in the size of the database and the extensions of the associated sets can be defined as views over the CQs  associated to the DCs. More precisely, we can: (a) introduce tuple-identifiers (tids) for the tuples in $D$, (b) assign an order, $\prec$, to the list of predicates in the  schema; and (c)
for each DC $\kappa\!:  \neg \exists \bar{x} \Phi(\bar{x})$, with $\Phi(\bar{x})$ being the associated CQ or join, introduce a new predicate $\nit{HE}_\kappa$ for the hyperedges associated to $\kappa$. For example, if $\kappa$ is $\neg \exists \bar{x}_1\bar{x}_2\bar{x}_3 (P(\bar{x}_1) \wedge R(\bar{x}_2) \wedge S(\bar{x}_3))$, with $P \prec R \prec S$, the extension of $\nit{HE}_\kappa$ is defined (in Dalatog) by: \
$\nit{HE}_\kappa(t_1,t_2,t_3) \leftarrow P(t_1;\bar{x}_1), R(t_2;\bar{x}_2), S(t_3;\bar{x}_3)$. \ Next, on the basis of the $\nit{HE}_\kappa$ one can define a predicate collecting the neighbors of tuples, which can be used to compute or estimate the maximum degree of a tuple (the $\Delta$ mentioned after Proposition \ref{prop:random}). \
It would be interesting to investigate to what extent optimal output size bounds for the set of answers to these ``denial CQs", i.e. to the CQs $\Phi(\bar{x})$ \cite{hung},  can be taken advantage of to provide optimal estimates for the sizes of the hyperedges and tuple degrees.

\subsection*{Alternative inconsistency measures}\label{sec:alter}

 Exploring other possible inconsistency measures in our relational setting is quite an  open research direction. \ Several (in)consistency measures have been considered in knowledge representation \cite{hunter,thimm,vanina}, mostly for the propositional case or are applied with grounded first-order representations. \ It would be interesting to analyze the general properties of those measures that are closer to database applications, along the lines of
\cite{eiterMannila}; and their relationships. For each measure it becomes relevant to investigate the complexity of its computation, in particular, in data complexity (even for simple key constraints, databases may have exponentially many repairs in the size of the database \cite{Bertossi2011}).

 A first observation is that, as argued in \cite{icdt07},  techniques and results for C-repairs can be extended to deal with databases whose tuples have weights, and in order to repair the aggregated
weight of removed tuples has to be a minimum.\footnote{Weighted repairs have been considered in \cite{icdt07,du,kolaitis17}.} Accordingly, $\mbox{\nit{inc-deg}}^{c,g_3\!}(D,\Sigma)$ and its results can be extended to ``weighted-repairs". \ Furthermore,
this measure, although based on tuple-deletions in the presence of DCs, can be applied with other classes of ICs, such as {\em inclusion dependencies}, and more generally,
{\em tuple-generating dependencies} (TGDs) \cite{AHV95},  if we still repair the database by tuple-deletions \cite{chomicki}. In this case, the results in Section \ref{sec:comple}  apply to TGDs since  their antecedents are treated as DCs.

We assume in the rest of this section that $\Sigma$ is a set of DCs, and the repair actions are tuple-deletions. Here below we briefly introduce a couple of alternative inconsistency measures that could be further investigated along similar lines as in the previous sections.

\vspace{-5mm}\begin{equation}
\hspace*{-5mm} \mbox{{\bf (A).}} \hspace{1cm} \mbox{\nit{inc-deg}}^{s,\#}(D,\Sigma) = \frac{|\nit{Srep}(D)|}{2^{|D|}}. \label{eq:inc}
\end{equation}
 Under DCs, there is always at least one S-repair (and exactly one if $D$ is already consistent or the single DC only prohibits a particular tuple); then the minimum value this measure can take is
 $\frac{1}{2^{|D|}}$. Since proper subsets of S-repairs are not S-repairs, this measure never takes the value $1$ (nor the value
$0$, as we just argued). \ Measure $\mbox{\nit{inc-deg}}^{c,\#}(D,$ $\Sigma)$, defined as in (\ref{eq:inc}) with C-repairs replacing S-repairs, does not coincide with \linebreak  $\mbox{\nit{inc-deg}}^{s,\#}(D,\Sigma)$ (in contrast with the measure in Section \ref{sec:ours}).

The denominator in (\ref{eq:inc}) may be too large. So, to obtain $0$ when the database is consistent, the measure could be modified as
\begin{equation}
\mbox{\nit{inc-deg}}^{\nit{all},\#}(D,\Sigma) := 1- \frac{|\{D'~|~D'\subseteq D \mbox{ and } D' \models \Sigma\}|}{2^{|D|}}. \label{eq:incMod}
\end{equation}
If $D$ is consistent, every subset also is, and the measure takes value $0$.

  The complexity of counting S-repairs wrt. FDs that satisfy a given Boolean conjunctive query (BCQ) was investigated in \cite{wijsenCount}. \ Depending on the syntactic form of the query,
  this can be done in polynomial time or is $\sharp P$-complete (a dichotomy);  all this in data complexity. It easy to obtain from these results that the problem of counting the number of S-repairs wrt. key constraints can be solved in polynomial time in data complexity: simply add an atom $A$ to the database that does not participate
  in any violation and ask how many S-repairs make the (very simple) BCQ about $A$ true.

The measure in (\ref{eq:inc}) could be generalized to $\mbox{\nit{inc-deg}}^{{\sf S},\#}(D,\Sigma)$, with a generic repair semantics $\sf{S}$, by replacing $\nit{Srep}(D)$ by
$\srep$. Under some repair semantics, an inconsistent database might have no repairs, e.g. if it accepts only endogenous repairs, as in Example \ref{ex:endo}.
In this case $\mbox{\nit{inc-deg}}^{{\sf S},\#}(D,\Sigma)$ returns $0$.
So, in this case the absence of repairs is interpreted, in some sense, as perfect consistency (in contrast to the result in Example \ref{ex:endo}).

\vspace{-3mm}\begin{equation}
\hspace*{-5mm}\mbox{{\bf (B).}} \hspace{1cm}\mbox{\nit{inc-deg}}^{s,J\!}(D,\Sigma) := 1- \frac{|\bigcap \nit{Srep}(D)|}{|D|},
\end{equation}
which is inspired by the {\em Jaccard} distance \cite{ullman}. It takes the value $0$ when $D$ is consistent, and $1$ when $\bigcap \nit{Srep}(D) = \emptyset$, i.e. when the intersection of the repairs is empty, showing that every tuple is involved in an IC violation, and nothing forces us to keep it in every repair.\footnote{An IC that forces a particular tuple to be in the database is not (logically equivalent to) a DC.}

As with (A), this measure can be generalized to $\mbox{\nit{inc-deg}}^{{\sf S},J\!}(D,\Sigma)$, with a generic repair semantics ${\sf S}$. In this case,
an inconsistent database might have no repairs (as discussed for (A) above); and, trivially, $\bigcap \srep = \bigcap \emptyset = D$; and then, \linebreak  $\mbox{\nit{inc-deg}}^{{\sf S},J\!}(D,\Sigma) = 0$. So as with (A), under this inconsistency measure the absence of repairs is interpreted as perfect consistency.

\subsection*{Beyond relational DBs: \ ontology-based data access}

Ontology-based data access (OBDA) is about accessing data from underlying sources through an ontology, most typically via queries expressed in the language of the ontology, which has access to the data through mappings \cite{xiao}. The combination of extensional database (EDB) and the ontology may become inconsistent and has to be repaired. The main approaches so far are based on (possibly virtual) changes on the EDB, mostly tuple deletions \cite{meghyn,lembo,vanina}, and consistently querying the resulting possible worlds (ontologies). Approaches to ``ontological inconsistency-tolerance" that privilege deletions of extensional tuples, and implicitly shift the culprit for inconsistency to the EDB make it reasonable to apply our inconsistency measures to the combination of extensional data and ontologies.

\subsection*{ASP, DBs and In-DB}\label{asp+}
Answer-set programming (ASP) can be seen as an extension of Datalog that supports disjunction, non-stratified negation, and constraints. Furthermore, if the semantics of ASP is applied to a Datalog program one reobtains the intended Datalog semantics. ASP has become the {\em de facto} standard language for representing and performing non-monotonic reasoning in knowledge representation.

Applying ASP to data management problems, with the database providing  the extensional data for the program,  is not only natural, but unavoidable if one wants to represent those data problems in general declarative terms, wants an exact solution, and the complexity of those problems is higher than polynomial (in data complexity) \cite{dlvDB,dlv,monica}. Actually, ASP captures problems at the second-level of the polynomial hierarchy \cite{dantsin}, and can be successfully used to specify and solve in declarative terms complex combinatorial problems. (For example, instead of following the repair-program route in Section \ref{sec:asp}, we could directly specify the hitting-sets or vertex-covers for the hyperedges in the conflict-hypergraph.)

ASP-based reasoning systems have been highly optimized \cite{brewka}, but for complexity-theoretic reasons they cannot be run inside a relational database. However, it would be really interesting to investigate, for database applications with large volumes of data, under what conditions and to what extent parts of the computation associated to the execution of an ASP can be pushed inside the database, where highly optimized join algorithms have been recently discovered and implemented \cite{hung}. In this direction there is exciting recent work on the implementation of machine learning and optimization algorithms inside the database, the {\em in-database} approach \cite{hung2}.
}

\end{document}